\newtheorem{theorem}{Theorem}
\newtheorem{lemma}[theorem]{Lemma}
\newtheorem{definition}{Definition}
\newcommand{\qed}{\hfill\rule{7pt}{7pt}}
\newenvironment{proof}{\noindent{\bf Proof:}}{\qed\medskip}
\newcommand{\defeq}{\stackrel{def}{=}}
\newcommand{\etal} {{\it et al.}}
\newcommand{\nb}{\mbox{nb}}
\newcommand{\Svec}{\mathbf{S}}
\newcommand{\pvec}{\mathbf{p}}
\newcommand{\qvec}{\mathbf{q}}
\newcommand{\qbase}{\overline{\mathbf{q}}}
\newcommand{\eps}{\pmb{\epsilon}}
\newcommand{\rvec}{\mathbf{r}}
\newcommand{\phat}{\hat{\pvec}}
\newcommand{\qhat}{\hat{\qvec}}
\newcommand{\pert}{\mathbf{t}} %perturbation
\newcommand{\Dvec}{\mathbf{D}}
\newcommand{\lambar}{\overline{\lambda}}
\begin{document}
%\markboth{L. Fortnow and R. Sami}{Multidimensional Market Scoring Rules}
\title{Multi-outcome and Multidimensional Market Scoring Rules\\
        (Manuscript)
      }
\author{
Lance Fortnow%
\thanks{Supported in part by NSF grant CCF-0829754}
       \\Northwestern University
       \\{\it fortnow@eecs.northwestern.edu}
\and
Rahul Sami%
\thanks{Supported in part by NSF grant CCF-0812042}
       \\University of Michigan
       \\{\it rsami@umich.edu}
}
\maketitle
\begin{abstract}
Hanson's market scoring rules allow us to design a prediction
market that still gives useful information even if we have an
illiquid market with a limited number of budget-constrained
agents. Each agent can ``move'' the current price of a market
towards their prediction.

While this movement still occurs in multi-outcome or
multidimensional markets we show that no market-scoring rule,
under reasonable conditions, always moves the price directly towards
beliefs of the agents. We present a modified version of a market scoring rule for budget-limited traders, and show that it does have the property that, from any starting position, optimal trade by a budget-limited trader will result in the market being moved towards the trader's true belief. This mechanism also retains several attractive strategic properties of the market scoring rule.

\end{abstract}

\section{Introduction}\label{sec:intro}
Prediction markets are markets set up for the primary purpose of aggregating information to forecast future events.
For a heavily traded prediction market, such as whether Obama
will be reelected in 2012, we can get a very precise prediction
from the market. As of January 30, 2012, the Intrade bid-ask
spread gives a prediction between 53.3\% and 53.9\% of
reelection. For other less liquid markets, the bid-ask spread gives a less
informative prediction. The Intrade bid-ask spread for whether
Sarah Palin will endorse Mitt Romney is sitting between 15.1\%
and 99.0\%.

To recover information in thin, illiquid markets, Robin Hanson
developed the market scoring rule~\cite{Hanson-MSR}. With this form prediction market, a market maker always quotes a single current price for each future events. If the agents have a belief different than the current price, a
trade will give them an a priori expected positive profit. The market maker adjusts the price as trades are made. If traders are risk-neutral and do not face budget constraints, Hanson showed that it is optimal for a trader to trade until the price matches her beliefs; the final market prices can thus be interpreted as updated probabilities for the forecast events.
The market maker loses money in expectation, but the total loss of the market maker can be bounded.

The further the change in price, the higher the up-front cost, and worst-case loss, that the trader has to bear. In real-world prediction markets, traders often have budget constraints that limit how much loss they can bear. In fact, many real money prediction market platforms impose limits on how much money an agent can invest over the entire platform, and most play-money markets limit the ability of traders to convert external money to virtual points that can be invested. Thus, limited budgets are an inherent feature of the context in which most prediction markets are run.
If an agent is budget-limited she can still trade, but may not be able to move the prices to match her beliefs.
%Still this allows a
%market to aggregate people's beliefs even without significant
%liquidity.
For two-outcome markets, like the ones described above, the
market scoring rule induces straightforward behavior in budget-limited agents:
An agent will move the prices of the securities directly towards her belief, and the final probability forecast will therefore be a mixture of the original market probabilities and the agent's true belief. In this paper, we explore the behavior of budget-limited traders in markets with multiple outcomes or multi-dimensional outcomes.
% orignalthough , with the
%agents moving the prices of the securities directly towards
%their beliefs.

Many natural events for which we seek forecasts have multiple outcomes.
%In this paper we explore markets with multiple
%outcomes and .
Consider the following
scenarios:
\begin{itemize}
\item A market over Republican candidates for the U.S. Presidential nomination:
    For this market, a forecast can be represented by a four-tuple probability vector
    $(p_R,p_G,p_P,p_S)$ for the probabilities that Romney,
    Gingrich, Paul and Santorum will be nominated.
\item A market predicting the latitude and longitude of the
    next tornado strike in Kansas. Here, the forecasts consist of probability distributions over a two-dimensional grid.
\item A market predicting the likelihood of hurricane strikes for
    each year over the next three years. Here, a forecast consists of a probability distribution over $\{0,1\}^3$; however, it may be natural to assume independence between different years.
\end{itemize}

For each of these markets, we can set up prediction markets based on proper scoring rules.
One way to do so is to break the markets into their components, a separate market for each candidate, for latitude
and longitude and for each year. More generally, however, we may have a complex combinatorial market whose payoffs are dependent on the combination of events that occur~\cite{Hanson-MSR}. In either case, a risk-neutral trader without budget constraints would optimize her payoffs by moving the market forecast to match her true belief. When traders are budget-limited, the optimal behavior may depend on the particular proper scoring rule used. One natural goal is to ask:
Can we find a proper scoring rule such that, for any positive budget an agent has, her optimal trade is always to move directly towards her true belief, so that the updated forecast is a mixture of the original forecast distribution and her true belief?

 Our main results show that this is impossible. In section~\ref{sec:3outcome} we show that, under mild smoothness assumptions, no scoring rule over three or more outcomes can have this property. For example, in the Republican candidate setting, we cannot find any market scoring rule that will induce a budget-constrained trader to move the market forecast in a straight line towards her true belief. 
 Further, we show that there are neighborhoods of possible beliefs that all result in exactly the same report; this shows that sequential aggregation in the MSR with budget constraints can be imperfect, because even if all budgets are public information, a trader's belief cannot be inferred from her report.
 The results in section~\ref{sec:3outcome} relies on the space of beliefs being unrestricted, and thus seems to leave open the possibility of a positive result when there are natural restrictions, such as independence of hurricane probabilities in different years. However, we show that this is not the case in section~\ref{sec:2dim}: Even when the outcome space consists of two binary events that are assumed to be independent, no smooth market scoring rule can induce a budget-limited agent to move the market distribution directly towards her true belief. Importantly, this latter result holds without assuming that the scoring rule is itself simple or separable, i.e., it may be a combinatorial scoring rule.

 In section~\ref{sec:mechanism}, we present a modified market scoring rule, the
Scaled Scoring Mechanism, that has this property of inducing budget-limited agents to move the forecast directly towards their true belief. This mechanism requires a trader to report her budget as well as her belief; we assume that the budget must be deposited up-front, and hence cannot be exaggerated (although it may be under-reported). We prove that a trader's optimal trade is to report her true budget and belief, and hence, the updated market forecast is always a mixture of her true belief and the original forecast. Further, we show that this mechanism retains many of the desirable properties of the market scoring rule: The market-maker's loss is bounded, and the traders' profit along a sequence of updates is sub-additive, so that a trader cannot gain a profit from splitting up her trade into two or more successive trades by different identities.
As this mechanism induces full information revelation from budget-limited traders, it also enables effective sequential aggregation of information: A trader who has seen previous traders' trades can infer their beliefs; hence, in a Bayesian world, a trader can generically condition her beliefs on all previous traders' private information.

 The rest of this paper is structured as follows: We discuss related work in section~\ref{sec:related}. We introduce our formal model in section~\ref{sec:model}. We prove the impossibility results in section~\ref{sec:3outcome} and~\ref{sec:2dim}, and introduce the new mechanism in section~\ref{sec:mechanism}. Finally, in section~\ref{sec:discussion}, we discuss extensions and limitations of our results, and directions for future work.

\subsection{Related Work}\label{sec:related}
 There is a long literature on the use of proper scoring rules to incentivize
accurate probabilistic forecasts of events~\cite{Good52}~\cite{WinklerMurphy68}.
Proper scoring rules are characterized by the property that, given any belief
$\pvec$, the report that globally maximizes the forecaster's expected (under
belief $\pvec$) reward is the honest report $\qvec = \pvec$. Subsequent
research on {\em effective scoring rules} studied the design of scoring rules
with a stronger monotonicity property. A scoring rule is said to be
effective with respect to a given metric $L$ over the space of probability
distributions if, given any two possible reports $\qvec, \qvec'$ such that
$\qvec$ is closer (under metric $L$) to the true belief $\pvec'$ than $\qvec'$,
the expected score of reporting $\qvec$ must be greater than the expected score
of reporting $\qvec'$~\cite{Friedman83}. Friedman~\cite{Friedman83} has shown that
there are scoring rules that are effective with respect to some metrics,
including the $L_2$ metric. Our negative results present a striking contrast:
We show that no scoring rule can induce a budget-limited agent to move
only along the direction of steepest decrease in $L_2$ distance to the
true belief.

  Hanson's market scoring rules~\cite{Hanson-MSR} transform scoring rules into a mechanism for obtaining sequential forecasts from multiple forecasters. Hanson showed that this sequential scoring form can serve as an automated market maker for prediction markets.

  Budget limits can be viewed as a special form of risk aversion.
Allen~\cite{Allen-scoring} described a lottery technique to extend scoring rules to
forecasters with arbitrary and unknown risk aversion. Dimitrov et al.~\cite{DSE09} show
that this technique can be extended to a sequential forecasting mechanism. However, the
mechanism of Dimitrov et al. has the undesirable property that expected profits shrink exponentially as the number of traders grows, and moreover, they prove that this is unavoidable in the context of mechanisms for arbitrarily risk averse agents. The mechanism we present here does not have this undesirable property. 

  Manski~\cite{Manski} and Wolfers and Zitzewitz~\cite{WZ-budget} have also studied the effect of budget limits on the aggregative properties of prediction markets. However, our model differs from their models in significant ways: 
They study a stylized unsubsidized market model in which clearing prices are determined in a one-shot game, whereas we model a market scoring rule with common0knowledge Bayesians who update beliefs after each trade.

\section{Model}~\label{sec:model}
\paragraph*{Outcomes and Distributions:}
Consider an event with more than two possible outcomes. In
order to simplify the notation, we begin with a three-outcome
event, and let the outcomes be denoted ${X,Y,Z}$. An agent has
a belief $\pvec=(p_X, p_Y, p_Z)$ on the probabilities of the
various outcomes. When asked for a forecast, the agent may
report a forecast $\qvec=(q_X, q_Y, q_Z)$. In general, for
unrestricted beliefs over a $k$-outcome event, $\pvec$ and
$\qvec$ can be represented as $k$-dimensional vectors, with the
implicit constraints that the elements are non-negative and sum
to $1$.

 We also consider a special, natural, class of {\em restricted} distributions, which
arise when the event being forecast itself has a dimensional
structure. For clarity, we defer the description of this
restricted class to section~\ref{sec:2dim}.

\paragraph*{Scoring Rules}
The agent is rewarded based on a scoring rule $\Svec$. We model
$\Svec$ as a mapping from the space of forecasts $\qvec$ to the
space of payments $\Svec(\qvec) \equiv (S_X(\qvec), S_Y(\qvec),
S_Z(\qvec))$  that will be made for each outcome. For
unrestricted beliefs over three-outcome events, the space of
all probability distributions  $\Delta =\Delta(\{X,Y,Z\})$ is a
two-dimensional space. The scoring rule $\Svec$ can be viewed
as a two-dimensional surface in $\Re^{3}$; each point
$\Svec(\qvec) \in \Re^3$ corresponds to a particular value of
$\qvec$.

\noindent{\bf Assumptions:} We assume that the scoring rule
$\Svec(\qvec)$ is continuous and differentiable over the
interior of the space of possible reports $\qvec$. (Actually,
all we really need is that it is continuous and differentiable
over some open set of reports.)
%As far as I know,
%all proper scoring rules satisfy these condition. We may be able to relax this,
%by proving that continuity and differentiability hold for all scoring rules (or
%citing such a result).
We also assume that each component of the scoring rule is
quasiconcave: $\forall \pvec, \qvec \; S_X(0.5 \pvec +
0.5\qvec) \geq \min (S_X(\pvec), S_X(\qvec)$, with equality
only if $S_X(\pvec) = S_X(\qvec)$; and likewise for $S_Y, S_Z$.
This is a mild condition (weaker than monotonicity of the score
functions), that is true for all commonly used scoring rules.
It is used to argue that, when moving from $\qvec$ to $\pvec$,
the budget increases monotonically along the path, and so any
point on the path is optimal for some budget. We assume that
agents are risk neutral (up to their hard budget constraints,
described below); thus, an agent's goal is to maximize her
expected score $\pvec \cdot \Svec(\qvec)$.

\paragraph*{Market Scoring Rules and Sequential Updates}
 Hanson's market scoring rules (MSR)~\cite{Hanson-MSR} are a form of prediction market that
extends scoring rules to a sequential information aggregation
mechanism. Consider a set of agents $\{1, 2, \ldots, m\}$
making sequential forecasts $\qvec_1, \qvec_2, \ldots,
\qvec_m$. With the market scoring rule based on scoring rule
$\Svec$, each agent $i$ receives a net payoff equal to the
difference between the score of $\qvec_i$ and $\qvec_{i-1}$.
Thus, the net expected payoff of an agent $i$ who believes a
distribution $\pvec$ is:
\[
  \pvec.[\Svec(\qvec_i) - \Svec(\qvec_{i-1})]
\]
Here, $\qvec_0$ is an initial default distribution specified by
the mechanism.

 Hanson observed that, with the MSR mechanism, if each agent trades just once and there are no budget constraints, the optimal action for each trader is to report $\qvec_i$ that
matches her true belief. Consequently, future traders who can
observe the past reports (prices) in the market can condition
on all previous agents' beliefs about the event. Generically,
this leads to perfect aggregation of the $m$ agents' combined
information.

  In our analysis of scoring with budget constraints, each agent's report $\qvec$ is scored with respect to a {\em reference point} $\qbase$, and thus her score is given by
the difference $\Svec(\qvec) - \Svec(\qbase)$. The notion of a
reference point or default $\qbase$ is an essential ingredient
of a model with budget constraints, because it describes what
the forecast will be when an agent has $0$ budget. The
reference point may be the previous traders' report
$\qvec_{i-i}$, as in the market scoring rule, but it may also
be determined differently; in section~\ref{sec:mechanism} we
present a mechanism with a different way of determining a
reference point for each trader.

\paragraph*{Budgets and Constrained Effectiveness}

 We are interested in scoring rules that are strictly proper: The unique
optimal $\qvec$, given belief $\pvec$, is $\qvec \equiv \pvec$.
However, we are further interested in a {\em budget-limited}
version of this property.  Informally, we want the scoring rule
to have the property that, given any initial value $\qbase$,
and any specified budget $b$,the optimal report within the
budget feasible set is a report $\qvec$ along the path from
$\qbase$ to $\pvec$.

 In order to make this more rigorous, we need to define the notion of a budget
more formally.
 One natural definition of the budget required to move from
$\qbase$ to $\qvec$ is in terms of the maximum loss that the
agent can incur from such a move; this is the amount of money
that an agent would require to have to avoid defaulting on the
mechanism.
\begin{definition}
 The {\bf natural budget} $\nb(\qbase,\qvec)$ required to move from $\qbase$ to $\qvec$,
given a scoring rule $\Svec$, is defined by
\[
   \nb(\qbase, \qvec) \defeq \max_{\pvec} \left[ \pvec \cdot (\Svec(\qbase) - \Svec(\qvec)) \right] = \max \{ S_X(\qbase) - S_X(\qvec), S_Y(\qbase) - S_Y(\qvec), S_Z(\qbase) - S_Z(\qvec) \}
\]
\end{definition}

 Together, the reference point $\qbase$ and an initial budget holding $b$ determine the
range of possible reports that an agent can make while avoiding
default under any outcome; the natural budget constraint
determines the set of feasible reports. Thus, the natural
budget constraint determines a budget-constrained agents'
choice set under the market scoring rule as typically
implemented. However, it may be possible for an alternative
mechanism to further restrict the set of allowable reports by
an agent; in section~\ref{sec:mechanism}, we will exploit this
to obtain a mechanism with better aggregative properties.

%One way to define a budget would be to note that scoring rules
%first define a
%cost function $C(\qbase,\qvec)$ that corresponds to the scoring rule, and then
%to define budgets in terms of the allowable increase in cost. However, this is
%not well-defined as is, because there is no unique cost function corresponding
%to a scoring rule: Given one cost function, we can modify it by forcing the
%agent to ``buy'' an arbitrary number of bundles of $(X,Y,Z)$ securities for \$1
%each; this would have different ``budget'' properties, and yet correspond to
%exactly the same scoring rule as far as net profit is concerned.

 Now, the {\em natural budget-constrained optimal report $\qvec^*$}, with budget $b$, is
the choice among all $\qvec$ such that $\nb(\qbase, \qvec) \leq
b$, that maximizes $\pvec \cdot \Svec(\qvec)$. Let
$\qvec^*(\pvec, \qbase, b)$ denote the optimal report with
budget $b$.

We can now state our desired property:
\begin{definition}
 A scoring rule $\Svec$ satisfies the budget-constrained truthfulness property
if the budget-constrained optimal choice is always a mixture of
the initial distribution $\qbase$ and the belief $\pvec$, and
as close to $\pvec$ as possible:
\[
\forall \pvec, \qbase, b > 0 \;\;  \qvec^*(\pvec, \qbase, b) = \alpha\pvec + (1-\alpha) \qbase
\]
where
\[
  \alpha \in [0,1] = \max \{ \alpha| b(\qbase, \alpha\pvec + (1-\alpha) \qbase) \leq b \}
\]
\end{definition}

This definition inherently implies that $\Svec$ is a proper
scoring rule (by taking a sufficiently large $b$); however, it
is significantly stronger than the properness condition.

\section{An impossibility result for unrestricted multi-outcome distributions}\label{sec:3outcome}
For two-outcome events, the budget-constrained truthfulness
property holds trivially for all common scoring rules, as the
natural budget and expected score both increase monotonically
as $\qvec$ moves from $\qbase$ towards the agent's true belief
$\pvec$. However, well-known scoring rules do not satisfy this
for general multiple-outcome events. Here, we show that this is
not accidental: We will prove that no scoring rule (satisfying
our technical assumptions) over three or more outcomes can
satisfy the budget-constrained truthfulness property. We will
assume without loss of generality that there are exactly three
outcomes $X, Y, Z$; the result holds {\em a fortiori} for
$k>3$. 

We will prove the impossibility result by contradiction. Suppose that we had
a scoring rule $\Svec$ such that $\Svec$ satisfied the
budget-constrained truthfulness property.

We first make a simple observation about the local structure of
the scoring surface:
\begin{lemma}\label{lemma:local}
  Consider a point $\pvec$ in the interior of $\Delta$, and consider a
neighboring point of the form $\pvec + \delta \eps$, where
$\eps$ is a nonzero vector such that $\epsilon_X + \epsilon_Y +
\epsilon_Z = 0$, and $\delta$ is an infinitesimally small
quantity.  Then, there exists a direction $\eps$ such that
$\frac{dS_X(\pvec + \delta \eps)}{d\delta} =0$.  Further, for
this $\eps$, $\frac{dS_Y(\pvec + \delta \eps)}{d\delta}$ and
$\frac{dS_Z(\pvec + \delta \eps)}{d\delta}$ must be nonzero and
have opposite signs.
\end{lemma}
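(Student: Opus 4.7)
My plan is to decompose the statement into three ingredients matching the three assertions: existence of a nonzero $\eps$ killing $dS_X/d\delta$; opposite signs of $dS_Y/d\delta$ and $dS_Z/d\delta$ along that $\eps$; and non-vanishing of both. Throughout, the relevant arena is the two-dimensional tangent subspace $T := \{\eta \in \mathbb{R}^3 : \eta_X+\eta_Y+\eta_Z = 0\}$ at $\pvec$, and the three linear functionals $\eta \mapsto \nabla S_i(\pvec)\cdot \eta$ restricted to $T$.

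For existence, I would observe that $\eta \mapsto \nabla S_X(\pvec)\cdot \eta$ is a single linear functional on the two-dimensional space $T$, so its kernel has dimension at least one; I pick any nonzero $\eps$ in that kernel. For the opposite-signs assertion, I would invoke the first-order optimality condition of strict properness (which follows from the assumed budget-constrained truthfulness applied with a sufficiently large budget): $\pvec\cdot\Svec(\qvec)$ is uniquely maximized at $\qvec = \pvec$, hence $\sum_i p_i \nabla S_i(\pvec) = 0$ as a functional on $T$. Writing $h_i := \nabla S_i(\pvec)\cdot\eps$, this yields $p_X h_X + p_Y h_Y + p_Z h_Z = 0$; since $h_X = 0$ by construction and $p_Y, p_Z > 0$ because $\pvec$ is interior, $h_Y$ and $h_Z$ must be either both zero or of strictly opposite signs.

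The delicate step, which I expect to be the main obstacle, is ruling out $h_Y = h_Z = 0$. This reduces to showing that the three tangent gradients $\nabla S_X(\pvec), \nabla S_Y(\pvec), \nabla S_Z(\pvec)$ span all of $T$ rather than a common one-dimensional subspace. My plan is to argue by contradiction: if they spanned only a line, then there would exist a nonzero $\eta \in T$ with $L(\eta) := \sum_i \eta_i \nabla S_i(\pvec) = 0$, which makes $\pvec$ a critical point of the expected score $\pvec' \cdot \Svec(\qvec)$ for the perturbed belief $\pvec' := \pvec + \mu\eta$; combined with strict quasiconcavity of each $S_i$ and the smoothness hypothesis, this is incompatible with $\pvec' \neq \pvec$ being the unique maximum demanded by strict properness. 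Once this rank-two property is established, the kernel of $\nabla S_X(\pvec)$ inside $T$ is exactly one-dimensional, so the chosen $\eps$ cannot also be annihilated by $\nabla S_Y(\pvec)$ and $\nabla S_Z(\pvec)$ (otherwise $\eps$ itself would vanish); combined with the second step, both $h_Y$ and $h_Z$ are nonzero with opposite signs. The hard part is making this rank-two claim rigorous under only $C^1$ smoothness, where no Hessian is available and the argument must lean on the structural consequences of properness and quasiconcavity rather than a generic second-order analysis.
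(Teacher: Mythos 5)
Your first two steps track the paper's: the existence of $\eps$ is the same kernel-of-a-linear-functional observation, and your first-order condition $p_X h_X + p_Y h_Y + p_Z h_Z = 0$ is essentially the same use of properness, yielding ``opposite signs or both zero.'' The genuine gap is in your third step, which you yourself flag and leave as a plan. The reduction to a rank-two claim is a correct reformulation, but the proposed contradiction does not go through: if the three restricted gradients had a common kernel direction $\eta$, you would indeed get that $\pvec$ is a critical point of $\qvec \mapsto \pvec'\cdot\Svec(\qvec)$ for the perturbed beliefs $\pvec' = \pvec + \mu\eta$, but strict properness only says that $\pvec'$ is the \emph{unique global maximizer} of this function; it does not exclude other critical points (saddles, local extrema) at $\pvec \neq \pvec'$, so no contradiction has been reached. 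Nor can quasiconcavity rescue this: quasiconcavity of the individual components $S_X, S_Y, S_Z$ is not inherited by the nonnegative combination $\pvec'\cdot\Svec(\qvec)$, so you cannot argue that every critical point must be the maximum. As written, the non-vanishing of $h_Y$ and $h_Z$ --- the entire content of the second assertion of the lemma --- is not established.

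The paper closes exactly this step with a different and more elementary device that your outline misses: apply properness \emph{at the perturbed point itself}. With $\qvec = \pvec + \delta\eps$ chosen so that $S_X$ is unchanged along the direction, strict properness used twice gives $\pvec\cdot[\Svec(\pvec)-\Svec(\qvec)]>0$ and $\qvec\cdot[\Svec(\pvec)-\Svec(\qvec)]<0$; since $\pvec$ and $\qvec$ have nonnegative entries and the $X$-component of the difference is (to first order) zero, the difference must have one strictly positive and one strictly negative entry among the $Y$ and $Z$ components, i.e.\ the directional derivatives of $S_Y$ and $S_Z$ are nonzero with opposite signs. (The paper's own treatment is informal at the infinitesimal level, identifying the finite difference with its first-order part, but it supplies the two-sided properness argument your plan lacks; without some such input, the rank-two property you would need does not obviously follow from $C^1$ smoothness, properness, and componentwise quasiconcavity alone.)
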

\begin{proof}
 Expanding $S_X$ around $\pvec$ in terms of its partial derivatives, we have:
\[
  \frac{dS_X(\pvec+ \delta \eps)}{d\delta} = \epsilon_X \frac{\partial S_X}{\partial p_X} +  \epsilon_Y \frac{\partial S_X}{\partial p_Y} - (\epsilon_X + \epsilon_Y) \frac{\partial S_X}{\partial p_Z}
\]
Setting the LHS to $0$, we get one linear constraint on
$\eps_X$ and $\eps_Y$; any solution to this will meet our
purposes.

For the second part of the statement, consider the point $\qvec
= \pvec+ \eps d\delta$. We have $S_X(\qvec) = S_X(\pvec)$.
However, as this is a proper scoring rule, we must have
$\pvec\cdot[S(\pvec) -S(\qvec)] >0$ and $\qvec\cdot[S(\pvec)
-S(\qvec)] <0$.  As $\pvec$ and $\qvec$ have non-negative
entries, the only way in which this can happen is if $S(\pvec)
- S(\qvec)$ has one positive and one negative entry.
\end{proof}

We next establish an intuitive result: given two points $\pvec,
\qbase$, there is a budget $b$ such that the
constrained-optimal report is the midpoint between $\pvec$ and
$\qbase$.
\begin{lemma}\label{lemma:midpoint}
 Consider a scoring rule $\Svec$ that satisfies the budget-constrained truthful
property. For any pair $\pvec, \qbase \in \Delta$, there is a
$b$ such that $\qvec^*(\pvec, \qbase,b) = 0.5\qbase +
0.5\pvec$.
\end{lemma}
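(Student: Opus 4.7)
My plan is to define $b := \nb(\qbase, \qvec(1/2))$ with $\qvec(\alpha) := \alpha \pvec + (1-\alpha) \qbase$, and show the constrained optimum at this budget is exactly the midpoint. Since $\qvec(1/2)$ is feasible by construction, the budget-constrained truthfulness property tells us the optimum has the form $\qvec(\alpha^*)$, where $\alpha^*$ is the largest $\alpha \in [0,1]$ satisfying $g(\alpha) := \nb(\qbase, \qvec(\alpha)) \leq b$. Then $\alpha^* \geq 1/2$ is automatic, and the lemma reduces to ruling out $\alpha^* > 1/2$, which will follow from monotonicity of $g$ along the segment (with a small perturbation of $b$ handling any plateau of $g$ at level $b$).

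To establish that $g$ is non-decreasing on $[0,1]$ I would argue by contradiction. Suppose $\alpha_1 < \alpha_2$ but $g(\alpha_1) > g(\alpha_2)$. Properness of $\Svec$ (implied by budget-constrained truthfulness) gives $g(\alpha) \geq \qbase \cdot [\Svec(\qbase) - \Svec(\qvec(\alpha))] \geq 0$, so in particular $g(\alpha_1) > 0$. Pick an outcome $i$ that attains the maximum in $g(\alpha_1)$ and set $h_i(\alpha) := S_i(\qbase) - S_i(\qvec(\alpha))$; then $h_i(\alpha_1) = g(\alpha_1) > 0$, $h_i(\alpha_1) > h_i(\alpha_2)$, and $h_i(0) = 0$. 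The paper's quasiconcavity assumption on $S_i$ makes $h_i$ quasiconvex along the segment, hence non-increasing on some $[0, \alpha_i^*]$ and non-decreasing on $[\alpha_i^*, 1]$. If $\alpha_1 \leq \alpha_i^*$, then $h_i(\alpha_1) \leq h_i(0) = 0$, contradicting $h_i(\alpha_1) > 0$; if $\alpha_1 \geq \alpha_i^*$, then $h_i(\alpha_2) \geq h_i(\alpha_1)$, contradicting $h_i(\alpha_1) > h_i(\alpha_2)$. Either way we reach a contradiction, so $g$ is non-decreasing, and pinning down $\alpha^* = 1/2$ then follows.

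The main obstacle is precisely this monotonicity step: $g$ is the pointwise maximum of the three quasiconvex functions $h_X, h_Y, h_Z$, and pointwise maxima of quasiconvex functions are not quasiconvex in general, so monotonicity cannot be read off directly from the component shapes. What rescues the argument is the interplay between the quasiconcavity of each $S_i$ (which pins down the shape of each $h_i$) and properness of $\Svec$ (which forces $g \geq 0$); together they leave no room for a strictly decreasing segment in $g$, which is precisely what the contradiction argument exploits.
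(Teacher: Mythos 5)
Your reduction and your monotonicity argument for $g(\alpha)=\nb(\qbase,\qvec(\alpha))$, $\qvec(\alpha)=\alpha\pvec+(1-\alpha)\qbase$, are fine as far as they go (the valley-shape argument via quasiconvexity of each $h_i$ does give $g$ non-decreasing), but there is a genuine gap exactly where you flag the plateau: non-strict monotonicity of $g$ does not pin down $\alpha^*=1/2$, and the proposed fix --- ``a small perturbation of $b$'' --- cannot work. If $g$ were constant, equal to $b=g(1/2)$, on an interval $[1/2,\,1/2+\delta]$, then at budget $b$ the budget-constrained truthfulness property forces the optimum to be $\qvec(\alpha^*)$ with $\alpha^*\geq 1/2+\delta$; at any budget $b'<b$ the midpoint itself is infeasible; and at any $b'>b$ the maximal feasible $\alpha$ is again at least $1/2+\delta$. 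So in the plateau case \emph{no} budget makes the midpoint optimal and the lemma's conclusion would simply be false; ruling out the plateau is therefore the substantive content of the lemma, not a technicality to be dodged. Nothing in your argument rules it out: properness plus non-strict quasiconcavity are perfectly compatible with one score component being constant along a sub-segment of the path, which produces exactly such a plateau.

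What closes the gap --- and is the heart of the paper's otherwise similar proof --- is the strict clause in the paper's quasiconcavity assumption (``with equality only if $S_X(\pvec)=S_X(\qvec)$''), applied to the outcome whose constraint binds at the midpoint. Let $X$ be an outcome with $S_X(\qbase)-S_X(\qvec(1/2))=b$; the paper first notes $b>0$, since otherwise an agent believing $\qbase$ would do at least as well reporting the midpoint, contradicting strict properness (a point you also need explicitly, because the truthfulness property is only stated for budgets $b>0$). For $\alpha>1/2$, the midpoint $\qvec(1/2)$ is a strict mixture of $\qbase$ and $\qvec(\alpha)$, and since $S_X(\qvec(1/2))<S_X(\qbase)$, the strict quasiconcavity clause forces $S_X(\qvec(\alpha))<S_X(\qvec(1/2))$, hence $g(\alpha)\geq S_X(\qbase)-S_X(\qvec(\alpha))>b$. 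This strict increase of the binding constraint beyond $1/2$ excludes any plateau of $g$ at level $b$ and yields $\alpha^*=1/2$. With this step added your argument is complete; without it, it is not.
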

\begin{proof}
 Let $b = \nb(\qbase, 0.5\qbase+0.5\pvec)$. We must have $b>0$, because if $b \le
0$, then an agent who believed $\qbase$ would make at least as
much expected profit by reporting $0.5\qbase + 0.5\pvec$
instead of being truthful; this would violate the properness of
the scoring rule.

Given the definition of the budget-constrained truthful
property, we only need to show that, for any $\alpha > 0.5$,
$\nb(\qbase, \alpha\pvec + (1-\alpha)\qbase)>b$.  Without loss
of generality, let us assume that $S_X(\qbase) - S_X(0.5 \pvec
+ 0.5\qbase) = b>0$.  Given a $\alpha >0.5$, we note that
$(0.5\pvec + 0.5 \qbase)$ is a mixture of $\qbase$ and $(\alpha
\pvec + (1-\alpha)\qbase$.  By our quasiconcavity assumption,
it must be true that $S_X(\alpha \pvec + (1-\alpha) \qbase <
S_X(0.5\pvec + 0.5\qbase)$. But then, by the definition of the
budget, we must have $\nb(\qbase, \alpha\pvec +
(1-\alpha)\qbase) >b$.
\end{proof}

The following lemma will allow us to construct a contradiction:
\begin{lemma}\label{lemma:construction}
 Suppose that $\Svec$ is continuous, differentiable, and satisfies the
budget-constrained optimality property. Then, there are
probability distributions $\pvec, \qvec, \qbase, \pvec',
\qbase'$ and budgets $b,b'$ such that the following properties
are satisfied:
\begin{enumerate}
\item $\qvec = 0.5\pvec + 0.5 \qbase$ and $\qvec =
    \qvec^{*}(\pvec, \qbase, b)$. More specifically,
    $S_X(\qbase) - S_X(\qvec)=b$
and $S_Y(\qbase) - S_Y(\qvec) < b$, $S_Z(\qbase) -
S_Z(\qvec) < b$.
\item $\qvec = 0.5\pvec' + 0.5 \qbase'$ and $\qvec =
    \qvec^{*}(\pvec', \qbase', b')$. More specifically,
    $S_X(\qbase') - S_X(\qvec)=b'$
and $S_Y(\qbase') - S_Y(\qvec) < b'$, $S_Z(\qbase') -
S_Z(\qvec) < b'$.
\item $\frac{p_Y}{p_Z} \neq \frac{p'_Y}{p'_Z}$
\end{enumerate}
\end{lemma}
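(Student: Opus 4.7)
The plan is to reverse-engineer two different line segments through a common midpoint $\qvec$, such that traveling from each segment's $\qbase$-endpoint toward $\qvec$ causes the $X$-coordinate of $\Svec$ to drop strictly more than the $Y$- or $Z$-coordinates, yet the two $\pvec$-endpoints have different $(p_Y,p_Z)$ proportions. Given such a pair, Lemma~\ref{lemma:midpoint} supplies the required budgets $b=\nb(\qbase,\qvec)$ and $b'=\nb(\qbase',\qvec)$, and all three clauses of the lemma follow immediately from the construction.

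Fix an interior point $\qvec\in\Delta$ and let $g_X,g_Y,g_Z$ denote the tangential projections of $\nabla S_X(\qvec),\nabla S_Y(\qvec),\nabla S_Z(\qvec)$ onto $T\Delta\defeq\{d\in\mathbb{R}^3:d_X+d_Y+d_Z=0\}$. Differentiating the properness inequality at $\qvec$ yields the first-order identity $q_X g_X+q_Y g_Y+q_Z g_Z=0$, which places the origin in the convex hull of $\{g_X,g_Y,g_Z\}$. I claim some $g_i$ is a strict extreme vertex of that hull: if all three were nonzero and lay on a common line, then the direction $\eps^*$ supplied by Lemma~\ref{lemma:local} (which satisfies $g_X\cdot\eps^*=0$) would be forced perpendicular to that line and would give $g_Y\cdot\eps^*=g_Z\cdot\eps^*=0$, contradicting that lemma's sign conclusion; if instead one $g_i$ vanishes tangentially, the properness relation forces the other two to be anti-parallel nonzero vectors, and these two are themselves extreme vertices. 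Relabeling outcomes (which preserves the form of the statement), I may assume the extreme vertex is $g_X$, so the open wedge
\[
  W \defeq \{d\in T\Delta : (g_X-g_Y)\cdot d>0 \ \text{and}\ (g_X-g_Z)\cdot d>0\}
\]
is a nonempty open subset of $T\Delta$.

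For any $d\in W$ and sufficiently small $t>0$, set $\qbase=\qvec+td$ and $\pvec=\qvec-td$; both lie in $\Delta$. The Taylor expansion $S_i(\qbase)-S_i(\qvec)=t(g_i\cdot d)+O(t^2)$, combined with the definition of $W$, ensures that the $X$-component strictly dominates for all sufficiently small $t$, so $\nb(\qbase,\qvec)=S_X(\qbase)-S_X(\qvec)>0$ and the strict-inequality clauses of items~1 and~2 hold. Lemma~\ref{lemma:midpoint} then gives $\qvec=\qvec^*(\pvec,\qbase,\nb(\qbase,\qvec))$. To secure condition~3, note that the map $d\mapsto(q_Y-td_Y)/(q_Z-td_Z)$ is a ratio of two non-proportional affine functions of $d$, and hence is non-constant on the open two-dimensional set $W$; choosing $d_1,d_2\in W$ on distinct level curves yields the five distributions and two budgets demanded by the lemma.

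The main obstacle is extracting the ``extreme vertex'' claim cleanly from Lemma~\ref{lemma:local}: its first-order conclusion does not by itself preclude every collinear configuration of the tangential gradients, and the degenerate case in which one gradient vanishes tangentially requires a short additional appeal to properness to show the remaining two are anti-parallel and extreme. Once that claim is secured, the wedge $W$ exists automatically and the remainder of the argument is routine first-order perturbation combined with the label symmetry among $X,Y,Z$.
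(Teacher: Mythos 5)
Your construction is correct in substance and delivers the lemma by a somewhat different device than the paper. The paper starts from an \emph{arbitrary} interior pair $(\pvec,\qbase)$, takes $\qvec$ to be their midpoint, relabels so that the $X$-constraint is binding, breaks a possible tie with the $Y$-constraint by a small perturbation justified through Lemma~\ref{lemma:local}, and then manufactures the second configuration by shifting $\qbase$ and $\pvec$ symmetrically about the fixed midpoint ($\qbase'=\qbase-\eps$, $\pvec'=\pvec+\eps$), using continuity of $\Svec$ to keep $X$ strictly binding and choosing $\eps$ with $p'_Y>p_Y$, $p'_Z<p_Z$. You instead fix $\qvec$ first, combine the first-order properness identity $q_Xg_X+q_Yg_Y+q_Zg_Z=0$ with Lemma~\ref{lemma:local} to show some tangential gradient is an extreme point of the hull (relabeled to $X$), and extract an open wedge $W$ of directions along which the $X$-score difference strictly dominates to first order; both configurations are then chords through $\qvec$ in two directions of $W$ with distinct $Y/Z$ ratios, with Lemma~\ref{lemma:midpoint} supplying the budgets exactly as in the paper. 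What your route buys is an explicit, quantitative reason why the binding constraint can be made \emph{uniquely} $X$ for a whole two-parameter family at once, where the paper relies on a terse tie-breaking step plus continuity; what it costs is the extreme-vertex case analysis. Two small repairs are needed, neither fatal: (i) in the degenerate case $g_X=0$, the first-order properness identity alone does not exclude $g_Y=g_Z=0$; you need Lemma~\ref{lemma:local} (not merely ``the properness relation'') to conclude the other two tangential gradients are nonzero before declaring them anti-parallel extreme points; (ii) your ``sufficiently small $t$'' depends on the direction $d$, while the level-curve selection of $d_1,d_2$ tacitly fixes $t$ first; reorder the quantifiers by choosing $d_1,d_2\in W$ with distinct values of the linear functional $d\mapsto q_Zd_Y-q_Yd_Z$ (whose level sets do not depend on $t$), or by restricting to a compact subcone of $W$, and then taking one common small $t$, after which $\frac{p_Y}{p_Z}\neq\frac{p'_Y}{p'_Z}$ follows for all sufficiently small $t$.
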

\begin{proof}
 We pick arbitrary $\pvec$ and $\qbase$ in the interior of the space of
possible distributions, and let $\qvec = 0.5\qbase + 0.5 \pvec$
be a mixture of these two distributions. Let $b = \nb(\qbase,
\qvec)$. By lemma~\ref{lemma:midpoint},  $\qvec =
\qvec^*(\pvec, \qbase, b)$.
 Now, the budget $b$ is the worst-case loss of moving from $\qbase$ to $\qvec$;
without loss of generality, assume that this loss occurs with
outcome $X$. Thus, we must have:
\[
   S_X(\qbase) - S_X(\qvec) = b
\]
 Suppose that we also have $S_Y(\qbase) - S_Y(\qvec) = b$. Then, it must be the case that $S_Z(\qbase) < S_Z(\qvec)$.
By Lemma~\ref{lemma:local}, we can perturb our choice of
$\pvec$ slightly such that only $S_X(\qbase) - S_X(\qvec) = b$,
while $S_Y(\qbase) - S_Y(\qvec) < b$ and $S_Z(\qbase) -
S_Z(\qbase) <b$.

Now, consider $\qbase' = \qbase - \eps$ and $\pvec' = \pvec' +
\eps$, where $\eps$ is a small perturbation that is constrained
to ensure that $\qbase'$ and $\pvec'$ are still probability
distributions. Note that $\qvec = 0.5\qbase' + 0.5\pvec'$. By
lemma~\ref{lemma:midpoint}, there is some budget $b'$ such that
$\qvec = \qvec^{*}(\pvec', \qbase', b')$. By the continuity of
$\Svec$, for all $\eps$ within a sufficiently small ball, it
must still be true that $S_X(\qbase) - S_X(\qvec)  >
S_Y(\qbase) - S_Y(\qvec), S_Z(\qbase) - S_Z(\qvec)$.

To get the last part of the construction, we pick $\eps$ such
that $p'_Y > p_Y, p'_Z < p_Z$.
\end{proof}

Finally, we can use the constructed structure of
Lemma~\ref{lemma:construction} to prove a contradiction.
\begin{theorem}\label{thm:impossibility}
  There is no scoring function $\Svec$ that satisfies the technical conditions
(continuity, differentiability, and quasiconcavity) and also
satisfies the budget-constrained truthfulness property.
\end{theorem}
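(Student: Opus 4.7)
My plan is to derive a contradiction from the first-order optimality conditions at the common point $\qvec$ produced by Lemma~\ref{lemma:construction}. The key observation will be that under $(\pvec,\qbase,b)$ the only budget constraint binding at $\qvec$ is the one for outcome $X$ (strict slack in $Y$ and $Z$), so by KKT on the $2$-dimensional simplex $\Delta$, there exists a multiplier $\lambda \ge 0$ with
\[
 (p_X+\lambda)\,\nabla S_X(\qvec) + p_Y\,\nabla S_Y(\qvec) + p_Z\,\nabla S_Z(\qvec) \;=\; 0,
\]
where gradients are taken in any chart of the tangent space of $\Delta$ at $\qvec$. I would then argue $\lambda>0$: otherwise $\qvec$ would be an unconstrained stationary point of $\pvec\cdot\Svec$, forcing $\qvec=\pvec$ by strict properness and contradicting $\qvec=0.5\pvec+0.5\qbase$ with $\qbase\neq\pvec$.

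Repeating the same argument for $(\pvec',\qbase',b')$ at the \emph{same} point $\qvec$ yields a second relation
\[
 (p'_X+\lambda')\,\nabla S_X(\qvec) + p'_Y\,\nabla S_Y(\qvec) + p'_Z\,\nabla S_Z(\qvec) \;=\; 0.
\]
Next I would use condition~3 of Lemma~\ref{lemma:construction}, $p_Y/p_Z \neq p'_Y/p'_Z$, to show that the coefficient triples $(p_X+\lambda,p_Y,p_Z)$ and $(p'_X+\lambda',p'_Y,p'_Z)$ in $\Re^3$ are linearly independent: any scalar forced by the last two entries of a would-be proportionality would equate the forbidden ratios. So the three vectors $\nabla S_X(\qvec),\nabla S_Y(\qvec),\nabla S_Z(\qvec)$, each sitting in the $2$-dimensional tangent space of $\Delta$ at $\qvec$, satisfy two linearly independent dependencies, and a dimension count forces them all to be collinear.

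To finish, I would invoke Lemma~\ref{lemma:local}: at the interior point $\qvec$ there is a direction $\eps$ along which the directional derivative of $S_X$ vanishes while those of $S_Y$ and $S_Z$ are both nonzero (with opposite signs). This says $\nabla S_Y(\qvec)$ cannot be a scalar multiple of $\nabla S_X(\qvec)$, contradicting the collinearity conclusion and hence proving Theorem~\ref{thm:impossibility}.

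I expect the main obstacle to be making the KKT step watertight on the simplex: choosing a coordinate chart on the tangent space, noting that $\qvec$ lies in the relative interior of $\Delta$ (so the probability-nonnegativity constraints are inactive), and carefully verifying $\lambda,\lambda'>0$ so that the two coefficient triples genuinely carry independent information beyond the raw distinction between $\pvec$ and $\pvec'$. Once those points are nailed down, the linear-algebraic rank argument and the appeal to Lemma~\ref{lemma:local} are essentially bookkeeping.
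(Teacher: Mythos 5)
Your proposal is correct and follows essentially the same route as the paper: it rests on the same two ingredients (Lemma~\ref{lemma:construction} and Lemma~\ref{lemma:local}), and once your KKT identity is contracted with the $S_X$-preserving direction $\eps$ it reduces exactly to the paper's first-order condition $p_Y\,\frac{dS_Y}{d\delta} + p_Z\,\frac{dS_Z}{d\delta} = 0$ applied to both $(\pvec,\qbase,b)$ and $(\pvec',\qbase',b')$ at the common optimum $\qvec$, with the contradiction coming from $p_Y/p_Z \neq p'_Y/p'_Z$. One small remark: the verification $\lambda,\lambda'>0$ you flag as the delicate step is actually unnecessary (and your sketch ``stationarity plus strict properness forces $\qvec=\pvec$'' would require concavity of the expected score in the report, not just properness), since linear independence of the two multiplier vectors already follows from their $(p_Y,p_Z)$ and $(p'_Y,p'_Z)$ entries alone.
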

\begin{proof}
Suppose that there is such a scoring function. Consider $\pvec,
\qvec, \qbase, \pvec', \qbase'$ and budgets $b,b'$ as
constructed in Lemma~\ref{lemma:construction}.

By assumption, if the initial position is $\qbase$, and an
agent has belief $\pvec$ and budget $b$, the optimal report is
$\qvec$. With this report, the agent's expected score is $\pvec
\cdot S(\qvec) - \pvec \cdot S(\qbase)$.

Now, consider the neighborhood of the report $\qvec$. By
Lemma~\ref{lemma:local}, there is a direction $\eps$ such that
$S_X(\qvec + \eps d\delta) = S_X(\qvec)$. By
lemma~\ref{lemma:construction}, and the continuity of $\Svec$,
this means that $\nb(\qbase, \qvec + \eps d\delta) =
\nb(\qbase, \qvec)=b$, as locally $S_X$ is the binding
constraint on the budget. In other words, the report $\qvec +
\eps d\delta$, as well as $\qvec - \eps d\delta$, are in the
feasible set for an agent with budget $b$.  As $S_X$ is
constant along direction $\eps$, the factors affecting the
agent's expected score are the change in $S_Y$ and $S_Z$. The
first-order conditions for optimality at $\qvec$ then give us:
\[
  p_Y \frac{dS_Y(\qvec+ \eps \delta)}{d\delta} +  p_Z \frac{dS_Z(\qvec+ \eps \delta)}{d\delta} = 0
\]
Rearranging, we must have:
\[
  \frac{\frac{dS_Z(\qvec+ \eps \delta)}{d\delta}}{\frac{dS_Y(\qvec+ \eps \delta)}{d\delta}} = - \frac{p_Y}{p_Z}
\]

Note that the left hand side depends only on $\qvec$.
Therefore, if we repeat our analysis with $\pvec'$, $\qbase'$,
and $b'$ we would derive that the same LHS  (with the same
vector $\eps$) is equal to $-\frac{p'_Y}{p'_Z}$. By the last
assertion in Lemma~\ref{lemma:construction}, this is a
contradiction.
\end{proof}

 Thus, for unrestricted distributions over 3 or more outcomes, it is impossible to
find a smooth scoring rule that will always incentivize agents
to move towards their true belief, given the natural budget
constraint.

In fact, we can prove an even stronger negative result about the aggregative properties of the market scoring rule in the presence of hard budget constraints: We can show that the optimal report of a budget-constrained agent can be locally insensitive to changes in her belief $\pvec$. This implies that future traders may not be able to infer this trader's belief from her trade, thus resulting in imperfect aggregation of information even if all budgets are known.

In order to prove this result, we first prove two technical lemmas. The first 
result just shows that there is a reports $\rvec$ such that two budget constraints are tight.
\begin{lemma}\label{lemma:xybudget}
 Given a strictly proper scoring rule $\Svec$ that satisfies the technical assumptions (differentiability and quasi-concavity), and given any starting probability distribution $\qbase$ in the interior of the space of distributions over $\{X,Y,Z\}$, there is a distribution $\rvec$ and constants $a,b >0$ such that two of the budget constraints with budget $b$ are tight:
 \[
   S_X(\rvec) - S_X(\qbase) = -b; S_Y(\rvec) - S_Y(\qbase) = -b; S_Z(\rvec) -S_Z(\qbase) = a
 \]
\end{lemma}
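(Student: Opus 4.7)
The plan is to construct $\rvec$ by moving from $\qbase$ along the level curve $\Gamma = \{\rvec \in \Delta : S_X(\rvec) - S_Y(\rvec) = S_X(\qbase) - S_Y(\qbase)\}$, on which the losses in $S_X$ and $S_Y$ relative to $\qbase$ are automatically equal. The key sign-control input will be the first-order optimality condition from strict properness, applied at the truthful optimum $\qvec = \pvec = \qbase$: working in the $2$-dimensional tangent space of the simplex,
\[
q_X \nabla S_X + q_Y \nabla S_Y + q_Z \nabla S_Z = 0 \quad \text{at } \qbase.
\]
In the generic case where $\nabla(S_X - S_Y) \neq 0$ at $\qbase$, the implicit function theorem makes $\Gamma$ a smooth $1$-manifold through $\qbase$.

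The heart of the argument is to show that $S_X$, and therefore $S_Y$ (which differs from $S_X$ by a constant on $\Gamma$), is not constant on $\Gamma$. If it were, then for any two distinct interior points $\rvec_1, \rvec_2 \in \Gamma$ the two instances of strict properness $\rvec_i \cdot (\Svec(\rvec_i) - \Svec(\rvec_j)) > 0$ would, after cancellation of the $S_X$ and $S_Y$ contributions, reduce to $r_{1Z}(S_Z(\rvec_1) - S_Z(\rvec_2)) > 0$ and $r_{2Z}(S_Z(\rvec_2) - S_Z(\rvec_1)) > 0$ simultaneously, a contradiction. Hence I can move from $\qbase$ along $\Gamma$ in a direction where $S_X$ strictly decreases, landing at an interior $\rvec$ with $b \defeq S_X(\qbase) - S_X(\rvec) = S_Y(\qbase) - S_Y(\rvec) > 0$. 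Strict properness applied with belief $\rvec$ then yields $r_X(-b) + r_Y(-b) + r_Z(S_Z(\rvec) - S_Z(\qbase)) > 0$, so $a \defeq S_Z(\rvec) - S_Z(\qbase) > (1 - r_Z) b / r_Z > 0$, as required.

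The main obstacle is the degenerate case $\nabla(S_X - S_Y) = 0$ at $\qbase$, which by the properness identity forces all three component gradients to be parallel at $\qbase$, so the first-order picture collapses. To handle it I would pass to the Hessian $q_X H_X + q_Y H_Y + q_Z H_Z$, which is negative definite on the tangent space by strict properness, use it to find a tangent direction along which $S_X$ and $S_Y$ curve downward at matching rates (keeping $S_X - S_Y$ stationary to second order while $S_X$ strictly decreases to second order), and then invoke quasi-concavity to extend the infinitesimal perturbation to a finite step; the properness-based lower bound on $a$ then applies unchanged.
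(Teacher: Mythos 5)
Your generic-case construction is a genuinely different route from the paper's: you move along the level set $\Gamma$ of $S_X - S_Y$ through $\qbase$, so the two decrements are equal by construction, and you get the sign control from pairwise strict-properness inequalities; the paper instead first locates a point $\rvec'$ whose worst-case loss is exactly $b$ (half the minimum of $\nb(\qbase,\cdot)$ on a small circle around $\qbase$, via compactness and properness) and then slides along the level curve of the single binding component $S_X$, using Lemma~\ref{lemma:local} to guarantee that $S_Y$ strictly varies along it, until the $Y$-constraint also binds. Your final step (properness with belief $\rvec$ forces $a > (1-r_Z)b/r_Z > 0$) is the same as the paper's, and your route has the advantage of not needing any particular budget level. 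One small slip is easily repaired: you conclude ``$S_X$ strictly decreases in some direction along $\Gamma$'' from mere non-constancy, which does not by itself rule out $\qbase$ being a minimum of $S_X$ on $\Gamma$; but your own two-point argument actually shows $S_X$ is \emph{injective} on the interior part of $\Gamma$ (any two points of $\Gamma$ with equal $S_X$ also have equal $S_Y$, giving the same contradiction), and a continuous injective function on the locally parameterized curve is strictly monotone, so it does strictly decrease on one side of $\qbase$.

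The genuine gap is the degenerate case $\nabla(S_X-S_Y)=0$ at $\qbase$. Your proposed repair asserts that the mixture Hessian $q_X H_X + q_Y H_Y + q_Z H_Z$ is negative \emph{definite} on the tangent space ``by strict properness,'' but strict properness only yields negative \emph{semi}definiteness (a second-order necessary condition at the maximum of the expected score); a strictly proper, twice-differentiable rule can have a degenerate direction at a point, exactly where this case bites. Moreover, even granting definiteness, the sketch does not produce a point where the two decrements are \emph{exactly} equal: in the degenerate case the level set of $S_X-S_Y$ through $\qbase$ need not contain any curve (it may be the single point $\qbase$), so there is nothing to ``extend the infinitesimal perturbation'' along, and quasiconcavity does not supply the needed equalization. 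A clean fix within the paper's toolkit: if $\nabla(S_X-S_Y)=0$ at $\qbase$, then by your first-order identity all three tangential gradients are parallel, so there is a direction in which $S_X$, $S_Y$, and $S_Z$ are all stationary, contradicting Lemma~\ref{lemma:local} (which gives, for the $S_X$-stationary direction, nonzero derivatives of $S_Y$ and $S_Z$ of opposite signs); alternatively, relabel outcomes and equalize a different pair, since the fully degenerate case of all gradients vanishing is what Lemma~\ref{lemma:local} excludes. As written, however, the degenerate branch of your proof does not go through.
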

\begin{proof}
 Consider the set of all distributions as two-dimensional space. As $\qbase$ is an interior distribution, there is a circle $Q=\{\qvec\}$ of distributions such 
 that $|\qvec-\qbase|=c$. Consider the function $\nb(\qbase,\qvec)$ on $Q$. As 
 $Q$ is a compact space, and $\nb$ is continuous, this function must have a minimum value, call it $2b$, that is achieved within $Q$. As this is a proper scoring rule, we must have $2b>0$.
 
  Now, consider some distribution $\rvec'$ such that $\nb(\qbase,\rvec')=b$.
  Thus, $\rvec'$ must lie within the circle $Q$. Without loss of generality, let us suppose that $S_X(\rvec') - S_X(\qbase) = -b$. Then, by lemma~\ref{lemma:local}, there is a direction of movement around $\rvec$ such that $S_X(.)$ is constant, while $S_Y(.)$ decreases. Moving along this direction, we must reach a point $r$ at which both $S_X(\rvec) - S_X(\qbase)= -b$ and $S_Y(\rvec) - S_Y(\qbase) = -b$. At this point $\rvec$, we must have 
 $a \defeq S_Z(\rvec) - S_Z(\qbase) >0$, because otherwise $\qbase$ would earn a higher score than $\rvec$ for all three outcomes, which is impossible in a proper scoring rule.
\end{proof}
Next, we show that for beliefs in the neighborhood of such a point $\rvec$, we can find a 2-dimensional open set with certain properties:
\begin{lemma}\label{lemma:xyperturb}
 Given the construction of Lemma~\ref{lemma:xybudget}, consider perturbations to the distribution $\rvec$ parameterized by 
a pair $(\epsilon_X, \epsilon_Y)$. Define $\epsilon_Z = -(\epsilon_X + \epsilon_Y)$ and consider distributions $\pvec = \rvec+ \eps$. Then,
there is an open neighborhood $N$ of pairs $(\epsilon_X, \epsilon_Y)$ 
(or equivalently, a neighborhood of distributions $\pvec$)
such that the following conditions hold:
\begin{enumerate}
\item $\epsilon_X < 0, \epsilon_Y <0, \epsilon_Z >0$
\item $S_Z(\pvec) \geq S_Z(\rvec) - a$
\item Either $S_X(\pvec) < S_X(\rvec)$ or $S_Y(\pvec) < S_Y(\rvec)$, or both.
\end{enumerate}
\end{lemma}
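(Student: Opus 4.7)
The plan is to identify three open properties whose intersection gives $N$. First, the open third quadrant $Q = \{(\epsilon_X,\epsilon_Y) : \epsilon_X < 0,\ \epsilon_Y < 0\}$ is open by definition, giving condition 1. Second, by continuity of $S_Z$ there is a ball $B_\rho$ of radius $\rho > 0$ around $0$ on which $|S_Z(\rvec+\eps) - S_Z(\rvec)| < a$, so condition 2 holds strictly throughout $B_\rho$. Third, by continuity of $S_X$ and $S_Y$, the set $U = \{\eps : S_X(\rvec+\eps) < S_X(\rvec)\} \cup \{\eps : S_Y(\rvec+\eps) < S_Y(\rvec)\}$ is open. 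It therefore suffices to exhibit a single point $\eps^* \in U \cap Q \cap B_\rho$; a small open ball around it contained in that intersection will be the required $N$.

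To produce $\eps^*$, I would use the first-order consequence of strict properness at $\rvec$, namely
\[
r_X \nabla S_X(\rvec) + r_Y \nabla S_Y(\rvec) + r_Z \nabla S_Z(\rvec) = 0
\]
in the tangent space of the simplex (the plane $\epsilon_X + \epsilon_Y + \epsilon_Z = 0$). Since $r_X, r_Y, r_Z > 0$, no nonzero direction can have nonnegative dot product with all three gradients together with at least one strict inequality. Hence for any chosen $\mathbf{v} \in Q$, at least one of $\mathbf{v}\cdot\nabla S_X(\rvec)$, $\mathbf{v}\cdot\nabla S_Y(\rvec)$, $\mathbf{v}\cdot\nabla S_Z(\rvec)$ is strictly negative. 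Whenever the strictly negative one is $\mathbf{v}\cdot\nabla S_X$ or $\mathbf{v}\cdot\nabla S_Y$, a small displacement along $\mathbf{v}$ places $\rvec + \eps^*$ into $U$, completing the argument.

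The main obstacle is the degenerate case in which, for \emph{every} $\mathbf{v} \in Q$, the strictly negative dot product is only with $\nabla S_Z$; equivalently, both $\nabla S_X(\rvec)$ and $\nabla S_Y(\rvec)$ lie in the closed third quadrant of the tangent space, so no first-order displacement within $Q$ lowers $S_X$ or $S_Y$. To handle this I would draw on the explicit construction of $\rvec$ in Lemma~\ref{lemma:xybudget}: $\rvec$ was reached by traversing the level curve of $S_X$ in a direction that strictly decreased $S_Y$ and strictly increased $S_Z$ (by Lemma~\ref{lemma:local}), stopping when the $Y$-budget became tight. This yields a tangent direction $\eps_{\text{came}}$ at $\rvec$ with $dS_X = 0$ and $dS_Y < 0$. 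By continuity of the gradients on a neighborhood of $\rvec$ and by applying Lemma~\ref{lemma:local} at a point slightly perturbed from $\rvec$ into $Q$, I would locate a nearby tangent direction with a component in $Q$ along which $S_Y$ still strictly decreases, supplying $\eps^*$ even in the degenerate case.
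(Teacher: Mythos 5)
Your reduction to finding a single witness point $\eps^* \in U \cap Q \cap B_\rho$ and taking a small ball around it is sound, and the non-degenerate branch (some $\mathbf{v}\in Q$ with $\mathbf{v}\cdot\nabla S_X(\rvec)<0$ or $\mathbf{v}\cdot\nabla S_Y(\rvec)<0$) works. But the degenerate case is a genuine gap, and your first-order toolkit cannot close it as sketched. A minor slip first: from $r_X\nabla S_X + r_Y\nabla S_Y + r_Z\nabla S_Z = 0$ you conclude that every $\mathbf{v}\in Q$ has a strictly negative dot product with one of the three gradients; this overlooks directions along which all three directional derivatives vanish (nothing you assume forbids the three gradients from being collinear at $\rvec$). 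The serious problem is the fix for the degenerate case. That case is, by your own definition, the situation in which \emph{no} direction in $Q$ decreases $S_X$ or $S_Y$ to first order. The direction inherited from the construction in Lemma~\ref{lemma:xybudget} has $dS_X=0$, $dS_Y<0$, so in the degenerate case it cannot lie in $Q$ (any $\mathbf{v}\in Q$ has nonnegative inner product with both gradients there); and a direction that merely has ``a component in $Q$'' does not give a displacement $\eps^*$ whose coordinates satisfy $\epsilon_X<0,\epsilon_Y<0$ while $S_X$ or $S_Y$ strictly drops below its value at $\rvec$. Perturbing to a nearby point and reapplying Lemma~\ref{lemma:local} does not escape this either: you would still need to show the \emph{total} displacement from $\rvec$ stays in $Q$ and strictly lowers $S_X$ or $S_Y$, which first-order information at $\rvec$ cannot deliver precisely when the degenerate configuration holds. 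So as written, no witness $\eps^*$ is exhibited in that case.

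The gap closes if you abandon gradients and argue on function values, which is what the paper does: condition (3) holds at \emph{every} $\eps$ with $\epsilon_X<0,\epsilon_Y<0,\epsilon_Z>0$ (and $\pvec$ a valid distribution), so no witness hunt is needed. Indeed, if $S_X(\pvec)\geq S_X(\rvec)$ and $S_Y(\pvec)\geq S_Y(\rvec)$, then necessarily $S_Z(\pvec)<S_Z(\rvec)$ (otherwise reporting $\pvec$ weakly dominates reporting $\rvec$ under every outcome, contradicting strict properness), and the sign pattern of $\eps$ gives
\[
(\pvec-\rvec)\cdot\left[\Svec(\pvec)-\Svec(\rvec)\right] \;=\; \eps\cdot\left[\Svec(\pvec)-\Svec(\rvec)\right] \;<\;0,
\]
hence $\pvec\cdot[\Svec(\pvec)-\Svec(\rvec)] < \rvec\cdot[\Svec(\pvec)-\Svec(\rvec)]$, while properness forces the left side to be positive and the right side negative --- a contradiction. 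With this two-point properness inequality, your set $U$ automatically contains all of $Q\cap B_\rho$ (intersected with the simplex), and $N$ is simply that open intersection; the degenerate case you worried about never arises at the level of values, only at the level of first derivatives.
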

\begin{proof}
 Consider the set $N_1$ of all $\eps$ such that condition (1) is satisfied, and $\eps$ is within a small enough ball that $\pvec$ is a valid distribution; this set is an open set. Now, consider all $\eps$ within a small ball such that 
 $S_Z(\pvec) \geq S_Z(\rvec) - a$. Because the scoring rule is continuous, this set must include an open set $N_2$ around $\eps=(0,0,0)$. Now, $N_1$ and $N_2$ have a non-empty intersection $N = N_1 \cap N_2$; $N$ is also an open set. 
 
 Thus, it remains to show that, for all $\eps$ in $N$, condition (3) is satisfied. We prove this by contradiction. Suppose that there was a $\eps \in N$ such that condition (3) was not satisfied, i.e., $S_X(\pvec)\geq S_X(\rvec)$ and $S_Y(\pvec) \geq S_Y(\rvec)$. We must have $S_Z(\pvec) < S_Z(\rvec)$. Then, we would have:
\begin{eqnarray*}
 (\pvec-\rvec).[\Svec(\pvec) - \Svec(\rvec)] &=& \eps.[\Svec(\pvec) - \Svec(\rvec)]  < 0 \\
 \Rightarrow \pvec.[\Svec(\pvec) - \Svec(\rvec)] &<& \rvec.[\Svec(\pvec) - \Svec(\rvec)]
\end{eqnarray*}
 The first inequality follows from the constructed signs of $\eps_X, \eps_Y$, and $\eps_Z$. However, for a proper scoring rule, the left hand side of the last inequality is positive, while the right hand side is negative, leading to a contradiction.
\end{proof}

Now, we can show that, for all beliefs $\pvec$ in this 2-dimensional open set $N$, the optimal report would be the same, $\rvec$.
\begin{theorem}\label{thm:insensitive}
 Given any proper scoring rule satisfying the technical assumptions, and any interior starting position $\qbase$, there is a 2-dimensional open set $N$ of 
 beliefs $\pvec$, a budget $b$, and a feasible report $\rvec$ such that, for all beliefs in $N$, the optimal budget-constrained report is $\rvec$. 
\end{theorem}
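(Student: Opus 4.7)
The plan is to fix an arbitrary $\pvec = \rvec + \eps$ in the neighborhood $N$ guaranteed by Lemma~\ref{lemma:xyperturb} and show directly that, among all reports respecting the natural budget $b$ starting from $\qbase$, the report $\rvec$ strictly dominates. The feasible set is $F = \{\qvec : S_i(\qbase) - S_i(\qvec) \leq b \text{ for } i \in \{X,Y,Z\}\}$; using the three equalities $S_X(\qbase) - S_X(\rvec) = b$, $S_Y(\qbase) - S_Y(\rvec) = b$, and $S_Z(\qbase) - S_Z(\rvec) = -a$ from Lemma~\ref{lemma:xybudget}, this rewrites as
\[
F \;=\; \{\qvec : S_X(\qvec) \geq S_X(\rvec),\; S_Y(\qvec) \geq S_Y(\rvec),\; S_Z(\qvec) \geq S_Z(\rvec) - a - b\},
\]
and in particular $\rvec \in F$.

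The core of the argument is to decompose $\pvec = \rvec + \eps$ so we can treat the ``truthful at $\rvec$'' part and the perturbation part separately. For any $\qvec \in F$, write $D_i \defeq S_i(\qvec) - S_i(\rvec)$; feasibility forces $D_X, D_Y \geq 0$. Applying strict properness of $\Svec$ at belief $\rvec$ gives $r_X D_X + r_Y D_Y + r_Z D_Z \leq 0$, with strict inequality whenever $\qvec \neq \rvec$. Since $\rvec$ is interior (so $r_Z > 0$) and $r_X D_X + r_Y D_Y \geq 0$, this pins down $D_Z \leq 0$. Then
\[
  \pvec \cdot (\Svec(\qvec) - \Svec(\rvec)) \;=\; \rvec \cdot (\Svec(\qvec) - \Svec(\rvec)) \;+\; \epsilon_X D_X + \epsilon_Y D_Y + \epsilon_Z D_Z.
\]
The first summand is $\leq 0$ by properness. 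Condition~(1) of Lemma~\ref{lemma:xyperturb} delivers $\epsilon_X, \epsilon_Y < 0 < \epsilon_Z$; combined with $D_X, D_Y \geq 0$ and $D_Z \leq 0$, each of the three products in the second summand is $\leq 0$, so the whole perturbation term is $\leq 0$.

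Putting the two bounds together gives $\pvec \cdot \Svec(\qvec) \leq \pvec \cdot \Svec(\rvec)$; for $\qvec \neq \rvec$, strict properness of $\Svec$ at $\rvec$ makes the first summand \emph{strictly} negative, so the inequality is strict. Hence $\rvec$ is the unique maximizer of expected score on $F$ for every $\pvec \in N$, and $N$ is two-dimensional by construction. The main obstacle lies in choosing the right decomposition: invoking properness at $\rvec$ (rather than at $\pvec$, where it would point the wrong way) and relying on the sign pattern of $\eps$ from Lemma~\ref{lemma:xyperturb}(1), the signs of the $D_i$ forced by the tight-corner geometry of $F$ at $\rvec$ line up term by term with the signs of the $\epsilon_i$, making every cross product cooperate. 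Conditions~(2) and~(3) of Lemma~\ref{lemma:xyperturb} do not enter the derivation explicitly; they serve as sanity checks --- notably, (3) guarantees $\pvec \notin F$, ruling out the trivial case in which the optimal feasible report would simply be $\pvec$ itself.
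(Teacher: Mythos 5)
Your proof is correct and follows essentially the same route as the paper's: both arguments take the construction of Lemmas~\ref{lemma:xybudget} and~\ref{lemma:xyperturb}, note that feasibility forces $S_X(\qvec)\geq S_X(\rvec)$ and $S_Y(\qvec)\geq S_Y(\rvec)$, use properness at $\rvec$ to force the sign of the $Z$-difference, and then exploit the sign pattern of $\eps$ in the decomposition $\pvec\cdot[\Svec(\qvec)-\Svec(\rvec)] = \rvec\cdot[\Svec(\qvec)-\Svec(\rvec)] + \eps\cdot[\Svec(\qvec)-\Svec(\rvec)]$ to conclude that $\rvec$ strictly dominates every other feasible report. The only (harmless) cosmetic differences are that you derive $D_Z\le 0$ explicitly via interiority of $\rvec$, and that the paper invokes condition~(3) of Lemma~\ref{lemma:xyperturb} to note $\pvec$ is infeasible, a fact your argument obtains as a byproduct.
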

\begin{proof}
 Construct $N, b$, and $\rvec$ according to Lemmas~\ref{lemma:xybudget} and Lemma~\ref{lemma:xyperturb}. Consider $\pvec$ in $N$. We can see, by the third condition in Lemma~\ref{lemma:xyperturb}, that $\pvec$ is not a feasible report given budget $b$. 
 
  Now, consider any feasible report $\qvec \neq \rvec$. As $\qvec$ is feasible, we must have $S_X(\qvec) \geq S_X(\rvec)$ and $S_Y(\qvec) \geq S_Y(\rvec)$. 
  As $\Svec$ is a proper scoring rule, it follows that we must have $S_Z(\qvec) < S_Z(\rvec)$. Now, for $\pvec$ and $\eps = \pvec - \rvec$, we have:
  \begin{eqnarray*}
   (\pvec - \rvec).[\Svec(\qvec) - \Svec(\rvec)] &=& \eps.[\Svec(\qvec) - \Svec(\rvec)] <0 \\
  \Rightarrow \pvec.[\Svec(\qvec) - \Svec(\rvec)] &<& \rvec.[\Svec(\qvec) - \Svec(\rvec)]  
  \end{eqnarray*}
 As $\Svec$ is proper, the RHS of the last inequality is negative. Thus, 
 the LHS must also be negative, implying that, under belief $\pvec$, $\rvec$ gives a higher expected score than $\qvec$. As this is true for all feasible $\qvec$, $\rvec$ must be the optimal report.
\end{proof}

 This result shows that failure of information aggregation cannot be avoided with the market scoring rule, {\em even if all budgets are common knowledge}: For an open set of beliefs, the corresponding report is the same $\rvec$, and 
 hence future traders cannot infer or condition on the precise belief held by a trader. For forecasting problems with $k>3$ outcomes, we believe that it should be possible to prove an extension to theorem~\ref{thm:insensitive} that shows that the report is insensitive to beliefs in a $(k-1)$-dimensional neighborhood; this is a direction for future work.

\section{Impossibility results for multi-dimensional events}\label{sec:2dim}
For some multi-outcome events, we may have reasonable
constraints on the family of distributions that agents may
believe and report. Of course, there are innumerable different
ways of defining constrained distributions over the outcomes.
In this section, we consider one very natural form of
constraint that arises when the outcome itself can be
decomposed into independent dimensions. For example, when
eliciting beliefs about the likelihood of a hurricane striking
a city this year and next year, it may be natural to assume
that, although there are four possible outcomes overall, the
likelihood of a strike is independent in the two years, and as
such, we can elicit independent probabilities for each year. In
this section, however, we show that
Theorem~\ref{thm:impossibility} extends to this class of
restricted preferences as well.

Without loss of generality, we focus on a model with two
dimensions of outcome and two possibilities in each dimension.
(The impossibility result directly follows for richer outcome
and report spaces.) Suppose that the outcome consists of a
pair, with the first component either {\em Top} or {\em Bottom},
and the second component either {\em Left} or {\em Right}. Each
forecast $\qvec$ is also a pair $(q_T, q_L)$. In this model, we
assume that the events {\em Top} and {\em Left} are
independent, so that these two parameters suffice to determine
the outcome. We will use the notation $\qhat$ to denote the
vector $(q_T q_L, q_T(1-q_L), (1-q_T)q_L, (1-q_T)(1-q_L))$,
i.e., the probabilities of $(TL, TR, BL, BR)$ implied by a
report vector $\qvec = (q_T, q_L)$.

Given the set of pairwise outcomes $\{TL, TR, BL, BR\}$, we can
represent a scoring rule as a set of four functions:
\[
  \Svec(\qvec) = [S_{TL}(\qvec), S_{TR}(\qvec), S_{BL}(\qvec), S_{BR}(\qvec)]
\]
Note that we do {\em not} assume that the scoring rule is
separable into separate scoring rules for each dimension. An
agents' score may depend arbitrarily on the pair of outcomes.

Theorem~\ref{thm:impossibility} does not immediately imply an
impossibility in this domain, because not all distributions
over $\{TL, TR, BL, BR\}$ are expressible in this model. The
set of reports is two-dimensional, $\Delta = \Delta(\{T,B\})
\times \Delta(\{L, R\})$. (If we had not assumed independence
of {\em Top} and {Left} events, but instead collected a
separate probability for each of the four outcomes, then
Theorem~\ref{thm:impossibility} would immediately apply.)
 However, we will show below that a very similar
construction works in this case as well.

We first prove an analogue of Lemma~\ref{lemma:local}:
\begin{lemma}\label{lemma:2dlocal}
  Consider a point $\pvec$ in the interior of $\Delta$, and consider a
neighboring point of the form $\pvec + \delta \eps$, where
$\eps$ is a nonzero vector such that $(\epsilon_T,
\epsilon_L)$, and $\delta$ is an infinitesimally small
quantity.  Then, there exists a direction $\eps$ such that
$\frac{dS_{TL}(\pvec + \delta \eps)}{d\delta} =0$.  Further,
for this $\eps$, out of the three other directional derivatives
$\frac{dS_{TR}(\pvec + \delta \eps)}{d\delta}$,
$\frac{dS_{BL}(\pvec + \delta \eps)}{d\delta}$, and
$\frac{dS_{BR}(\pvec + \delta \eps)}{d\delta}$, at least one
must be strictly positive, and at least one must be strictly
negative.
\end{lemma}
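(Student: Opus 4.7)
The plan is to mirror the proof of Lemma~\ref{lemma:local}, but with two adjustments: the perturbation lives in the 2-dimensional parameter space $(q_T, q_L)$ rather than the simplex, and the strict properness inequalities must be applied to the induced joint distributions $\phat, \qhat$ over the four pairwise outcomes, not to $\pvec, \qvec$ directly.

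First I would establish the existence of the direction $\eps$ annihilating $\frac{dS_{TL}}{d\delta}$. Expanding to first order,
\[
  \frac{dS_{TL}(\pvec + \delta \eps)}{d\delta}\bigg|_{\delta=0} = \epsilon_T \frac{\partial S_{TL}}{\partial q_T} + \epsilon_L \frac{\partial S_{TL}}{\partial q_L},
\]
so setting the derivative to zero is a single linear equation in two unknowns and always admits a nonzero solution $\eps = (\epsilon_T, \epsilon_L)$.

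For the second part, fix such $\eps$, set $\qvec = \pvec + \eps d\delta$, and let $\Dvec = \Svec(\pvec) - \Svec(\qvec)$; by choice of $\eps$, the $TL$-component of $\Dvec$ is zero to first order. Strict properness, applied to the induced joint distributions over $\{TL, TR, BL, BR\}$, gives $\phat \cdot \Dvec > 0$ and $\qhat \cdot \Dvec < 0$. The key observation is that because $\pvec$ is interior, $\phat = (p_T p_L,\, p_T(1-p_L),\, (1-p_T)p_L,\, (1-p_T)(1-p_L))$ has all four entries strictly positive (and likewise for $\qhat$ when $d\delta$ is small enough). If the three remaining entries of $\Dvec$ indexed by $TR, BL, BR$ were all non-positive, then $\phat \cdot \Dvec \le 0$ would contradict the first inequality, so at least one of them is strictly positive; equivalently, at least one of $\frac{dS_{TR}}{d\delta}, \frac{dS_{BL}}{d\delta}, \frac{dS_{BR}}{d\delta}$ is strictly negative. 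Symmetrically, $\qhat \cdot \Dvec < 0$ forces at least one of these three directional derivatives to be strictly positive.

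The main obstacle, compared to Lemma~\ref{lemma:local}, is the mismatch between the 2-dimensional report space and the 4-dimensional space of outcome-indexed payments: unlike the unrestricted three-outcome case, there is no direct simplex argument tying the signs of $\Dvec$ to those of $\eps$. What rescues the argument is precisely the multiplicative structure of $\phat$ under the independence assumption --- interiority in $(q_T, q_L)$ automatically yields strict positivity in all four components of the induced $\phat$ --- so that the same non-negativity-versus-non-positivity dichotomy on the remaining three entries of $\Dvec$ still delivers the desired opposite signs.
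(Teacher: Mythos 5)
Your proposal is correct and follows essentially the same route as the paper's own proof: obtain $\eps$ from the single linear constraint $\epsilon_T \partial S_{TL}/\partial q_T + \epsilon_L \partial S_{TL}/\partial q_L = 0$, then apply strict properness to the induced four-outcome distributions $\phat, \qhat$ to force $\Svec(\pvec)-\Svec(\qvec)$ to have both a strictly positive and a strictly negative entry among the non-$TL$ components. The only difference is cosmetic: you invoke strict positivity of the entries of $\phat$ and $\qhat$, whereas nonnegativity (as in the paper) already suffices for the sign dichotomy.
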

\begin{proof}
 Expanding $S_X$ around $\pvec$ in terms of its partial derivatives, we have:
\[
  \frac{dS_X(\pvec+ \delta \eps)}{d\delta} = \epsilon_T \frac{\partial S_{TL}}{\partial p_T} +  \epsilon_L \frac{\partial S_{TL}}{\partial p_L}
\]
Setting the LHS to $0$, we get one linear constraint on
$\epsilon_T$ and $\epsilon_L$; any solution to this will meet
our purposes.

For the second part of the statement, consider the point $\qvec
= \pvec+ \eps d\delta$. We have $S_{TL}(\qvec) =
S_{TL}(\pvec)$. However, as this is a proper scoring rule, we
must have $\phat\cdot[\Svec(\pvec) -\Svec(\qvec)] >0$ and
$\qhat\cdot[\Svec(\pvec) -\Svec(\qvec)] <0$.  As $\phat$ and
$\qhat$ have non-negative entries, the only way in which this
can happen is if $\Svec(\pvec) - \Svec(\qvec)$ has at least one
strictly positive and at least one strictly negative entry.
\end{proof}

Lemma~\ref{lemma:midpoint} does apply in this setting.
Therefore, we move on to prove an analogue of
Lemma~\ref{lemma:construction}:

\begin{lemma}\label{lemma:2dconstruction}
 Suppose that $\Svec$ is continuous, differentiable, and satisfies the
budget-constrained optimality property. Then, there are
probability distributions $\pvec, \qbase, \qvec=0.5\pvec +
0.5\qbase$, a ball radius $r>0$ and a budget function
$b(\pert)$ defined for all $\pert: |\pert| < r$ such that the
following property is satisfied.
  Let $\qbase(\pert) = \qbase - \pert$ and $\pvec(\pert) = \pvec + \pert$. Then,
$  \forall \pert \mbox{ s.t. } |\pert| < r$,
\[
 \qvec = \qvec^{*}(\pvec(\pert), \qbase(\pert), b(\pert))
\]
and
\begin{eqnarray*}
S_{TL}(\qbase(\pert)) - S_{TL} (\qvec) = b(\pert) > \max \{&&
S_{TR}(\qbase(\pert)) - S_{TR} (\qvec),\\&&
S_{BL}(\qbase(\pert)) - S_{BL} (\qvec),
S_{BR}(\qbase(\pert)) - S_{BR} (\qvec) \}
\end{eqnarray*}
\end{lemma}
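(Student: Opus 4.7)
The proof plan is to mimic the structure of Lemma~\ref{lemma:construction}: first construct a single base point where $TL$ is strictly the binding constraint among the four outcomes, then invoke continuity of $\Svec$ to propagate this strict inequality to a full ball of perturbations, with Lemma~\ref{lemma:midpoint} supplying the budget at every perturbation.

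For the base point, I would choose $\pvec$ and $\qbase$ arbitrarily in the interior of $\Delta = \Delta(\{T,B\}) \times \Delta(\{L,R\})$ and set $\qvec = 0.5\pvec + 0.5\qbase$. By Lemma~\ref{lemma:midpoint}, the budget $b = \nb(\qbase, \qvec) > 0$ satisfies $\qvec = \qvec^{*}(\pvec, \qbase, b)$, and some outcome, which we relabel as $TL$, achieves the worst-case loss $S_{TL}(\qbase) - S_{TL}(\qvec) = b$. If any of the remaining three differences also equals $b$, I would use Lemma~\ref{lemma:2dlocal} to find a direction $\eps$ at $\qvec$ along which $S_{TL}$ is stationary while the other scores strictly change; translating $\qvec$ a tiny amount along $\pm\eps$ and correspondingly perturbing $\pvec$ and $\qbase$ so that their midpoint remains $\qvec$ breaks tied constraints. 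I would iterate this perturbation as needed so that, after perturbation, $S_{TL}(\qbase) - S_{TL}(\qvec)$ is the unique maximum among the four score-differences, strictly exceeding the other three.

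The main obstacle is arranging for \emph{all three} potential ties at $TR$, $BL$, $BR$ to be broken simultaneously, because Lemma~\ref{lemma:2dlocal} only guarantees one strictly positive and one strictly negative derivative in the zero-$S_{TL}$ direction, so a single perturbation in the $+\eps$ direction might relieve one tie while tightening another. I would handle this by observing that the set of midpoint pairs $(\pvec, \qbase)$ for which multiple constraints are tied is cut out by extra equalities and is therefore a lower-dimensional subvariety; generic choices avoid it. Equivalently, the set of base points yielding strict $TL$-binding is open in the full-dimensional space of midpoint pairs, and is nonempty because the tie-breaking perturbation of Lemma~\ref{lemma:2dlocal} produces a point inside it.

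Having secured a base point with strict $TL$-binding, extension to a ball is a direct continuity argument. For each $\pert$ in a small neighborhood of the origin, set $\qbase(\pert) = \qbase - \pert$ and $\pvec(\pert) = \pvec + \pert$; crucially, $\qvec = 0.5(\pvec(\pert) + \qbase(\pert))$ does not depend on $\pert$, so Lemma~\ref{lemma:midpoint} gives $b(\pert) = \nb(\qbase(\pert), \qvec)$ with $\qvec = \qvec^{*}(\pvec(\pert), \qbase(\pert), b(\pert))$. Since $S_{*}(\qvec)$ is independent of $\pert$ and $S_{*}(\qbase(\pert))$ is continuous in $\pert$, the strict inequality established at $\pert = 0$ persists on some open ball of radius $r > 0$, which we further shrink if needed to keep $\qbase(\pert)$ and $\pvec(\pert)$ in the interior of $\Delta$. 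This yields the required $r$ and $b(\cdot)$.
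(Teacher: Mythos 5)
Your overall skeleton (midpoint base point via Lemma~\ref{lemma:midpoint}, relabel the binding outcome as $TL$, then a continuity argument over a ball of perturbations $\pert$ with $\qvec$ fixed as the common midpoint) matches the paper, and your final ball-extension step is essentially the paper's. The gap is in the middle step, where you must make $S_{TL}(\qbase)-S_{TL}(\qvec)$ the \emph{strict, unique} maximum of the four score differences. You correctly identify the obstacle --- Lemma~\ref{lemma:2dlocal} only guarantees that, along the direction keeping $S_{TL}$ stationary, \emph{some} one of the other three derivatives is positive and \emph{some} one is negative, so a perturbation may fail to break a given tie or may create a new one --- but your resolution does not hold up. The claim that the tie locus ``is cut out by extra equalities and is therefore a lower-dimensional subvariety'' is not justified under the paper's hypotheses: $\Svec$ is only assumed continuous and differentiable, so a difference such as $S_{TL}-S_{TR}$ could be locally constant, making the tie condition $S_{TL}(\qbase)-S_{TL}(\qvec)=S_{TR}(\qbase)-S_{TR}(\qvec)$ hold on an open set of pairs; genericity cannot be invoked without an argument. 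Your fallback --- that the set of strictly-$TL$-binding base points is nonempty ``because the tie-breaking perturbation of Lemma~\ref{lemma:2dlocal} produces a point inside it'' --- is circular in exactly the degenerate case that matters: if the tied score ($S_{TR}$, say) is itself stationary along the zero-$S_{TL}$ direction, the perturbation does not move the tie at all, and Lemma~\ref{lemma:2dlocal} gives you no other direction.

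The paper closes this hole with a case analysis that you are missing, built around the device of \emph{swapping the roles of $\qbase$ and $\qvec$} (and redefining $\pvec$ accordingly so that $\qvec$ remains the midpoint), which flips the signs of all four score differences. Since properness forces at least one of the three non-$TL$ differences to be negative, either (i) exactly one is negative, in which case swapping leaves a unique positive difference and there is nothing to break; or (ii) exactly one, say $TR$, is positive and possibly tied with $b$, in which case one uses the zero-$S_{TL}$ direction: if $S_{TR}$ moves along it, perturb $\qvec$ to break the tie, and if $S_{TR}$ is also stationary, then $S_{BL}$ and $S_{BR}$ must move in opposite directions, so swapping $\qbase$ and $\qvec$ leaves exactly two positive differences ($BL$, $BR$) whose potential tie can then be broken by a perturbation along that same direction. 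You need some argument of this kind (or another concrete construction) in place of the genericity appeal; as written, the existence of a base point with a unique strictly binding constraint --- the hypothesis on which your whole ball argument rests --- is not established.
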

\begin{proof}
 We pick arbitrary $\qbase$ in the interior of the space of
possible distributions, and pick a $\qvec$ sufficiently close
to $\qbase$ such that $\qvec + (\qvec-\qbase)$ and $\qbase +
(\qbase-\qvec)$ are both valid probability distributions. Let
$\pvec = \qvec + (\qvec-\qbase)$, so that  $\qvec = 0.5\qbase +
0.5 \pvec$. Let $b = b(\qbase, \qvec)$. By
lemma~\ref{lemma:midpoint},  $\qvec = \qvec^*(\pvec, \qbase,
b)$. Now, the budget $b$ is the worst-case loss of moving from
$\qbase$ to $\qvec$; without loss of generality, assume that
this loss occurs with outcome $TL$. Thus, we must have:
\[
   S_{TL}(\qbase) - S_{TL}(\qvec) = b
\]

 Now, consider the other components of the score difference:
$(
   S_{TR}(\qbase) - S_{TR}(\qvec),
   S_{BL}(\qbase) - S_{BL}(\qvec),
   S_{BR}(\qbase) - S_{BR}(\qvec)
)$. We first show that we can perturb the choice of $\qvec$
(and perhaps permute the outcome names) such that all of these
components are strictly less than $b$.

Note that at least one of these components must be negative as
the scoring rule is proper when the true distribution is
$\qvec$. If exactly one of these components is negative, we can
swap $\qbase$ and $\qvec$ (changing $\pvec$ accordingly), such
that there is a unique positive component among all four score
differences, which would satisfy the requirement.

Thus, we can restrict our attention to the case in which one
component -- say $S_{TR}(\qbase) - S_{TR}(\qvec)$ is positive,
while the other two components are negative. In this case,
there is a problem only if $S_{TR}(\qbase) - S_{TR}(\qvec) =
b$. Now, let us apply Lemma~\ref{lemma:2dlocal}. This
guarantees a direction of change such that $S_{TL}$ is
unchanged. If $S_{TR}$ changes along this direction, then we
can slightly perturb $\qvec$ to get this property. If $S_{TR}$
is unchanged, then, by lemma~\ref{lemma:2dlocal}, both $S_{BL}$
and $S_{BR}$ must change, in opposite directions. In the latter
case, we can again swap $\qbase$ and $\qvec$; this would
guarantee that exactly two components are positive, and a small
perturbation would guarantee that one of them is uniquely the
maximum.

Now, we can safely assume that we have picked $\qbase$ and
$\qvec$ such that the only tight budget constraint is that
$S_{TL}(\qbase) - S_{TL}(\qvec) = b>0$. Consider $\qbase(\pert)
= \qbase - \pert$ and $\pvec(\pert) = \pvec' + \pert$, where
$\pert$ is a small perturbation. By the continuity of $\Svec$,
for all $\pert$ within a ball of sufficiently small radius $r$,
it must still be true that $S_{TL}(\qbase) - S_{TL}(\qvec)  >
S_{TR}(\qbase) - S_{TR}(\qvec), S_{BL}(\qbase) - S_{BL}(\qvec),
S_{BR}(\qbase) - S_{BR}(\qvec)$. Note that $\qvec =
0.5\qbase(\pert) + 0.5\pvec(\pert)$. By
lemma~\ref{lemma:midpoint}, there is some budget $b(\pert)$
such that $\qvec = \qvec^{*}(\pvec(\pert), \qbase(\pert),
b(\pert))$.
\end{proof}

Finally, we can use the constructed structure of
Lemma~\ref{lemma:2dconstruction} to prove a contradiction. This
proof is analogous to Theorem~\ref{thm:impossibility}

\begin{theorem}\label{thm:2dimpossibility}
  For the two-dimensional outcome setting, there is no scoring function $\Svec$ that satisfies the technical conditions (continuity, differentiability, and quasiconcavity) and also satisfies the budget-constrained truthfulness property.
\end{theorem}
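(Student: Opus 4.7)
The plan is to follow the structure of the proof of Theorem~\ref{thm:impossibility}, but adapted to the richer family of belief/reference pairs that Lemma~\ref{lemma:2dconstruction} provides. First I would apply Lemma~\ref{lemma:2dconstruction} to obtain a configuration $\qvec$, $\qbase(\pert)$, $\pvec(\pert)$, $b(\pert)$ for $\pert$ ranging over a 2-dimensional open ball, in which $\qvec$ is simultaneously the budget-constrained optimum for the entire family of pairs and the $S_{TL}$-constraint is the uniquely tight one throughout. This replaces the pair of configurations used in the 3-outcome proof with a full 2-dimensional family.

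Next I would invoke Lemma~\ref{lemma:2dlocal} at $\qvec$ to produce a direction $\eps$ in report space along which $S_{TL}$ is locally constant; denote the directional derivatives of $S_{TR}, S_{BL}, S_{BR}$ along $\eps$ by $d_{TR}, d_{BL}, d_{BR}$. Because the other three budget constraints are strictly slack and $\Svec$ is continuous, the infinitesimal moves $\pm \eps \delta$ remain feasible for every $\pert$ in the ball. The KKT condition for $\qvec$ to be optimal under belief $\pvec(\pert)$ then reduces to the single linear equation
\[
  p_T(1-p_L)\, d_{TR} + (1-p_T)p_L\, d_{BL} + (1-p_T)(1-p_L)\, d_{BR} = 0,
\]
where $(p_T, p_L) = \pvec(\pert)$. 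Crucially, $d_{TR}, d_{BL}, d_{BR}$ depend only on $\qvec$ and $\eps$, not on $\pert$.

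To close the argument I would use the fact that this equation must hold for every $(p_T, p_L)$ in a 2-dimensional open set. Expanding the left-hand side as a polynomial in $1, p_T, p_L, p_T p_L$ and matching coefficients shows that the three coefficient functions $p_T(1-p_L)$, $(1-p_T)p_L$, $(1-p_T)(1-p_L)$ are linearly independent as functions of $(p_T, p_L)$, so the only way the equation can hold on an open set is $d_{TR} = d_{BL} = d_{BR} = 0$. But Lemma~\ref{lemma:2dlocal} guarantees that at least one of these derivatives is strictly positive and at least one strictly negative, yielding the contradiction.

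The main obstacle, relative to the 3-outcome proof, is that the belief space here is only 2-dimensional while the vector of unknown derivatives has three components, so a single pair of beliefs with differing ratios (as in Lemma~\ref{lemma:construction}) is no longer sufficient. The polynomial-independence step is what lets us exploit the full 2-dimensional family of perturbations delivered by Lemma~\ref{lemma:2dconstruction}; the remaining work --- verifying feasibility of the $\pm\eps\delta$ perturbations and writing down the KKT condition --- is a direct analogue of the 3-outcome case.
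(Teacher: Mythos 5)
Your proposal is correct and follows essentially the same route as the paper's proof: the same construction from Lemma~\ref{lemma:2dconstruction}, the same use of Lemma~\ref{lemma:2dlocal} to get a feasible direction with $S_{TL}$ constant, and the same first-order condition $\hat{\pvec}(\pert)\cdot\Dvec=0$ holding over a 2-dimensional open set of beliefs, forcing $\Dvec=0$ and contradicting Lemma~\ref{lemma:2dlocal}. The only difference is cosmetic: you establish $\Dvec=0$ by noting the linear independence of the coefficient polynomials in $(p_T,p_L)$, whereas the paper derives the same conclusion by evaluating the condition at four nearby points and subtracting the resulting equations.
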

\begin{proof}
Suppose that there is such a scoring function. Consider
$\pvec(\pert), \qvec, \qbase(\pert),b(\pert)$ as constructed in
Lemma~\ref{lemma:2dconstruction}.

Fix an arbitrary $\pert$ within the ball, and let us use the
shorthand $\pvec, \qbase, b$  for $\pvec(\pert), \qbase(\pert),
b(\pert)$ respectively. By assumption, if the initial position
is $\qbase$, and an agent has belief $\pvec$ and budget $b$,
the optimal report is $\qvec$. With this report, the agent's
expected score is $\phat \cdot S(\qvec) - \phat \cdot
S(\qbase)$.

Now, consider the neighborhood of the report $\qvec$. By
Lemma~\ref{lemma:2dlocal}, there is a direction $\eps$ such
that $S_{TL}(\qvec + \eps d\delta) = S_{TL}(\qvec)$. By
lemma~\ref{lemma:2dconstruction}, and the continuity of
$\Svec$, this means that $b(\qbase, \qvec + \eps d\delta) =
b(\qbase, \qvec)=b$, as locally $S_{TL}$ is the binding
constraint on the budget. In other words, the report $\qvec +
\eps d\delta$, as well as $\qvec - \eps d\delta$, are in the
feasible set for an agent with budget $b$.  As $S_{TL}$ is
constant along direction $\eps$, the factors affecting the
agent's expected score are the change in $S_{TR}, S_{BL}$ and
$S_{BR}$. The first-order conditions for optimality at $\qvec$
then give us:
\[
  \phat \cdot \left(
  \frac{dS_{TL}(\qvec+ \eps \delta)}{d\delta} +
  \frac{dS_{TR}(\qvec+ \eps \delta)}{d\delta} +
  \frac{dS_{BL}(\qvec+ \eps \delta)}{d\delta} +
  \frac{dS_{BR}(\qvec+ \eps \delta)}{d\delta}
 \right) = 0
\]

Note that $\qvec$ and $\eps$ are independent of our choice of
$\pert$. Thus, for all $\pert$ in an open neighborhood, we must
have:
\begin{equation}
 \phat(\pert) \cdot \Dvec = 0
\label{eq:subspace}
\end{equation}

where $\Dvec$ is a vector with first component $0$, and at
least one positive and one negative component. In other words,
all $\phat(\pert)$ corresponding to an open set of reported
probabilities $\pvec(\pert)$ must lie in a single linear
subspace of 4-dimensional space.

We can now demonstrate a contradiction, by showing that, as the
set of feasible $\phat$ is curved, no linear subspace can hold
an open neighborhood. Consider an arbitrary estimate $\pvec_0 =
(x,y)$ within this neighborhood, and three points in its
neighborhood $\pvec_1 = (x + \delta, y)$, $\pvec_2 = (x,
y+\delta)$, and $\pvec_3 = (x + \delta, y + \delta)$, where
$\delta>0$.

Equation~\ref{eq:subspace} must then hold for $\phat_0,
\phat_1, \phat_2, \phat_3$. Thus, we must get:
\begin{eqnarray*}
(\phat_1 - \phat_0)\cdot \Dvec = 0 & \Rightarrow & \delta (y, 1-y, -y, -1+y)\cdot \Dvec=0 \\
(\phat_2 - \phat_0)\cdot \Dvec = 0 & \Rightarrow & \delta (x, -x, 1-x, -1+x)\cdot \Dvec=0 \\
(\phat_3 - \phat_1)\cdot \Dvec = 0 & \Rightarrow & \delta (x+ \delta, -x-\delta, 1-x-\delta, -1+x+\delta)\cdot \Dvec=0 \\
  \mbox{(subtract second eq from third)}                                & \Rightarrow & (1,-1, -1, 1)\cdot\Dvec=0 \\
  \mbox{(first eq. } - \delta y \times \mbox{ fourth eq.)}              & \Rightarrow & (0,1, 0, -1)\cdot\Dvec=0 \\
  \mbox{(second eq. } - \delta x \times \mbox{ fourth eq.)}              & \Rightarrow & (0,0, 1, -1)\cdot\Dvec=0 \\
%(\phat_3 - \phat_2)\cdot \Dvec = 0 & \Rightarrow & \delta (y+ \delta, 1-y-\delta, -y-\delta, -1+y+\delta)\cdot \Dvec=0 \\
%  \mbox{(subtract first eq)}                               & \Rightarrow & (1,-1, -1, 1)\cdot\Dvec=0
\end{eqnarray*}
Collecting all the equalities, we have:
\[
(1,0,0,0)\cdot\Dvec=0 ; \;\;
(1,-1,-1,1)\cdot\Dvec=0 ; \;\;
(0,1,0,-1)\cdot\Dvec=0 ; \;\;
(0,0,1,-1)\cdot\Dvec=0
\]
(The first equality is because, by construction, the first
component of $\Dvec$ is $0$.) However, the only solution to
this set of equations is $\Dvec = (0,0,0,0)$. This contradicts
the property, guaranteed by Lemma~\ref{lemma:2dlocal}, that
$\Dvec$ has at least two nonzero entries.
\end{proof}

\section{A mechanism for budget-constrained elicitation}\label{sec:mechanism}
In this section, we describe a mechanism, the Scaled Scoring
Mechanism, for sequential information aggregation that modifies
the market scoring rule to address the limitations under budget
constraints. The mechanism and its characterization are fairly
simple, but it serves to illustrate how designing for budget
constraints can be useful. The sequential update and scoring
rules we use were previously proposed for binary outcomes in
the context of recommender systems by Resnick and
Sami~\cite{RS07}; that paper did not address the properties for
multi-outcome events or the possibility of mis-reporting
budgets that we focus on here.

We make one important assumption: Our mechanism will ask users
to report their current budget $b$, but {\em we assume that it
is impossible for an agent to report a higher budget than her
true budget $b$}, although users may strategically under-report
$b$. Such a mechanism could be sustained by requiring each
agent to deposit her entire reported budget $b$ up front, at
the time of making her report. If the budget is truly a hard
constraint, then an agent will not be able to deposit more than
$b$, even if she is sure that she cannot lose all that she
deposits.

Consider a sequence of agents $1, 2, \ldots m$ interacting with
a mechanism, in which the goal of the mechanism designer is to
elicit an unrestricted distribution $\qvec$ over a given
outcome space. Let $\Svec$ be a proper scoring rule that
satisfies the following properties: Each component of
$\Svec(\qvec)$ is concave in $\qvec$, and for all allowed
$\qvec, \qbase$, $\nb(\qbase, \qvec) \leq B$. In other words,
the maximum budget required for any feasible move is bounded by
$B$. For example, $\Svec$ may be the quadratic scoring rule. If
the range of allowed distributions is restricted slightly so
that all individual probabilities are bounded away from $0$, we
could even use the logarithmic scoring rule $\Svec(\qvec) =
(\log q_x, \log q_y, \log q_z)$.

The {\em Scaled Scoring Mechanism} operates as follows:
\begin{enumerate}
\item At any time, the mechanism has a forecast $\qbase$
    that also serves as a reference point. Initialize
    $\qbase_0=q_0$, an arbitrary initial default
    prediction.
\item For $i=1,2, \ldots, m$:
  \begin{enumerate}
    \item Ask agent $i$ to report a budget (denoted by
        $b'_i$) and a forecast $\qvec_i$.
    \item Update
         \[
            \qbase_i = \qbase_{i-1} + \max \left\{ 1, \frac{b'_i}{B}\right\} (\qvec_i - \qbase_{i-i})
         \]
  \end{enumerate}
\item When the outcome is revealed (say, to be $X$), score
    the agents as follows: each agent $i$ gets a net payoff
    equal to:
\[
  \max \left\{ 1, \frac{b'_i}{B} \right\} \left[S_X(\qvec_i) - S_X(\qbase_{i-1})\right]
\]
\end{enumerate}

We make the following observations about the scaled scoring
mechanism: Firstly, when reported budgets are unlimited ($b'_i
> B$), the updates and scoring are exactly the same as the
market scoring rule. However, when budgets are limited, the
score is a scaled-down value proportional to the market scoring
rule score, {\em and} the reference point is moved only a
fraction of the distance to the reported belief $\qvec_i$.

 We can prove the following properties about the Scaled Scoring Mechanism:
\begin{theorem}\label{theorem:SSM}
 The Scaled Scoring Mechanism satisfies the following properties:
\begin{enumerate}
\item \label{prop:honesty} Myopic strategyproofness:
    Assuming that each agent $i$ cannot report $b'_i$
    higher than her true
budget $b_i$, and that each trader trades just once, it is
optimal for each trader to report her true belief $\pvec_i$
and her true budget $b_i$.

\item \label{prop:budget} No default: Under any outcome,
    each agent $i$'s net payoff is at least $-b'_i$.
\item \label{prop:constrainedtruthful} Budget-constrained
    truthful: After $i$ selects her optimal report, the
    forecast
distribution $\qbase_i$ is a mixture of the earlier
forecast distribution $\qbase_{i-1}$ and $i$'s true belief
$\pvec_i$, for any budget $b_i$.

\item \label{prop:limitedloss} Limited Loss: The worst-case
    loss of the market maker is no worse than the
    worst-case
loss of a market scoring rule market for the same outcome
space, with the same initial distribution $\qbase_0$.

\item \label{prop:sybilproof} Myopic sybilproofness: It is
    never profitable for an agent to divide her budget
    among
multiple identities who make consecutive reports.%
\footnote{This is a relaxation of the path-invariance
property characteristic of market scoring rules, in which
the sum of payoffs of consecutive trades is equal to the
payoff of a single trade with the same final report.}
\end{enumerate}
\end{theorem}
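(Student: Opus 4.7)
The plan is to verify each of the five properties separately, using (i) strict properness of $\Svec$, (ii) component-wise concavity of $\Svec$, and (iii) the uniform bound $\nb(\qbase,\qvec)\le B$. Throughout I abbreviate the scaling factor as $\lambda_i = \min\{1, b'_i/B\}$, so the update reads $\qbase_i=(1-\lambda_i)\qbase_{i-1}+\lambda_i\qvec_i$ and agent $i$'s ex post payoff under outcome $X$ is $\lambda_i[S_X(\qvec_i)-S_X(\qbase_{i-1})]$. I will write $A(\qbase,\pvec)\defeq\pvec\cdot[\Svec(\pvec)-\Svec(\qbase)]\ge 0$ for the unscaled MSR truthful gain.

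Properties \ref{prop:honesty}, \ref{prop:budget}, and \ref{prop:constrainedtruthful} drop out almost immediately. Agent $i$'s expected payoff factors as $\lambda_i\cdot\pvec_i\cdot[\Svec(\qvec_i)-\Svec(\qbase_{i-1})]$; strict properness uniquely maximizes the inner factor at $\qvec_i=\pvec_i$ with non-negative value $A(\qbase_{i-1},\pvec_i)$, after which maximizing $\lambda_i$ subject to $b'_i\le b_i$ gives $b'_i=\min\{b_i,B\}$, so truthful reporting of both belief and budget is weakly optimal. For no-default, the worst-case ex post loss is at most $\lambda_i\cdot\nb(\qbase_{i-1},\qvec_i)\le\lambda_i B=\min\{b'_i,B\}\le b'_i$. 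Plugging $\qvec_i=\pvec_i$ into the update gives $\qbase_i=(1-\lambda_i)\qbase_{i-1}+\lambda_i\pvec_i$, exactly the required convex combination.

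For property \ref{prop:limitedloss}, I apply component-wise concavity of $\Svec$ at the convex combination $\qbase_i=(1-\lambda_i)\qbase_{i-1}+\lambda_i\qvec_i$ to obtain, for every outcome $X$, $S_X(\qbase_i)-S_X(\qbase_{i-1})\ge \lambda_i[S_X(\qvec_i)-S_X(\qbase_{i-1})]$; the right side is precisely agent $i$'s payoff under outcome $X$. Summing and telescoping, the market maker's total payout under any outcome is at most $S_X(\qbase_m)-S_X(\qbase_0)$, which is exactly the worst-case payout of an unscaled MSR market started at $\qbase_0$ and ending at $\qbase_m$.

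Property \ref{prop:sybilproofness} will be the main obstacle and requires a careful induction. Consider an agent with belief $\pvec$ splitting budget $b$ into $k$ consecutive sybil reports with deposits $b_j$ (with $\sum b_j\le b$), scalings $\lambda_j$, and free choice of intermediate reports $\qvec_j$; the objective is not separable across sybils because later reference points depend on earlier reports, so a naive sybil-by-sybil bound is insufficient. I plan to show by induction on $k$ that the total payoff for any intermediate strategy is at most $\bigl[1-\prod_{j=1}^{k}(1-\lambda_j)\bigr]A(\qbase_0,\pvec)$. The inductive step rewrites sybil $1$'s payoff as $\lambda_1[A(\qbase_0,\pvec)-D]$, where $D=\pvec\cdot[\Svec(\pvec)-\Svec(\qvec_1)]\ge 0$ is the properness slack at report $\qvec_1$; applies concavity of the linear functional $\pvec\cdot\Svec(\cdot)$ at the update step to obtain $A(\qbase_1,\pvec)\le(1-\lambda_1)A(\qbase_0,\pvec)+\lambda_1 D$; and invokes the inductive hypothesis on sybils $2,\ldots,k$ starting from $\qbase_1$. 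Combining these and observing that the resulting $D$-coefficient is non-positive (so the bound is tightest at $D=0$, i.e., when sybil $1$ also reports truthfully) yields exactly the claimed bound. Comparing with the single-trade payoff $\min\{1,b/B\}\,A(\qbase_0,\pvec)$ via the elementary inequality $1-\prod_j(1-\lambda_j)\le\sum_j\lambda_j\le\min\{1,b/B\}$, with strict inequality whenever at least two $\lambda_j$ are positive, completes the argument.
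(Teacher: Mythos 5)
Your arguments for properties \ref{prop:honesty}--\ref{prop:limitedloss} are correct and essentially identical to the paper's: factor the expected payoff into the scaling term times the proper-scoring-rule gain for strategyproofness and budget-constrained truthfulness, use the uniform bound $B$ on $\nb$ for no default, and use component-wise concavity at the convex-combination update, then telescope, for limited loss. (You silently correct the paper's $\max\{1,b'_i/B\}$ to $\min\{1,b'_i/B\}$, which is clearly the intended scaling.)

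For myopic sybilproofness you take a genuinely different route. The paper handles $b\ge B$ separately, then for $b<B$ treats the two-sybil case explicitly --- bounding the intermediate reference score $v_1$ by concavity and showing the two-trade payoff is at most $(\lambda_1+\lambda_2)\max\{u_1-u_0,u_2-u_0\}$, i.e., the payoff of a single trade at the better of the two reports with the combined budget --- and then iterates by replacing the last two trades with one. Your induction instead proves a closed-form bound, $\bigl[1-\prod_j(1-\lambda_j)\bigr]\,\pvec\cdot[\Svec(\pvec)-\Svec(\qbase_0)]$, on the total payoff of \emph{any} sequence of sybil reports (truthful or not, with any budget split), using exactly the same concavity inequality but applied to $A(\qbase_1,\pvec)$; this handles all $j$ and both budget regimes uniformly and compares directly against the truthful single trade, which is arguably cleaner and avoids the paper's implicit assumption that the better of the two sybil reports yields a nonnegative gain. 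The argument is correct; the only blemish is the final chain: $\sum_j\lambda_j\le\min\{1,b/B\}$ is false in general (the sum can exceed $1$ when several $b_j$ are large), but what you actually need, $1-\prod_j(1-\lambda_j)\le\min\{1,b/B\}$, does follow since the left side is at most $1$ and at most $\sum_j\lambda_j\le\sum_j b_j/B\le b/B$; state it that way and the proof is complete.
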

\begin{proof}
(\ref{prop:honesty}): First, we observe that for any fixed
reported budget $b'_i$, the optimal report $\qvec_i$ is equal
to agent $i$'s belief $\pvec_i$: Her payoff is proportional to
$\Svec(\qvec_i) - \Svec(\qbase_{i-1})$, and as $\Svec$ is a
proper scoring rule, her expected payoff is maximized when
$\qvec_i = \pvec_i$. Now, with $\qvec_i=\pvec_i$, we note that
her expected payoff is:
\[
  \max \left\{ 1, \frac{b'_i}{B} \right\} \pvec_i . \left[S(\pvec_i) - S_X(\qbase_{i-1})\right]
\]
As $\Svec$ is a proper scoring rule, $\pvec_i.\left[S(\pvec_i)
- S_X(\qbase_{i-1})\right] $ is non-negative. Hence, her
expected payoff is non-decreasing in $b'_i$, and as she cannot
exaggerate her budget, it is optimal for her to also report her
true budget $b'_i = b_i$.

(\ref{prop:budget}):
  By definition of $B$, agent $i$'s payoff cannot be less than $- b'_i$.

(\ref{prop:constrainedtruthful}):
 From part~(\ref{prop:honesty}), the optimal strategy for $i$ is to truthfully
report her true belief $\pvec_i$ and her true budget $b_i$. As
the update to $\qbase$ is done by mixing with $i$'s report, the
budget-constrained truthfulness property holds.

(\ref{prop:limitedloss}):
 Here, we use the fact that each component of the scoring rule $\Svec$ is concave.
Let $\lambda = \max{1, \frac{b'_i}{B}}$. By concavity we have,
for outcome $X$:
\begin{eqnarray}
S_X(\qbase_i) &=& S_X((1-\lambda)\qbase_{i-1} + \lambda \qvec_i)\\
& \geq & (1-\lambda)S_X(\qbase_{i-1}) + \lambda S_X(\qvec_i)\\
\Rightarrow
S_X(\qbase_i) - S_X(\qbase_{i-1}) & \geq & \lambda [S_X(\qvec_i) - S_X(\qbase_{i-1})
\label{eq:concave}
\end{eqnarray}
Similar inequalities hold for other outcomes $Y, Z,..$. Note
that the right hand side of Equation~\ref{eq:concave} is
precisely the payment made to agent $i$ by the SSM mechanism.
The right hand side of the equation can be interpreted as the
payment that the MSR mechanism would have made to agent $i$ if
the sequence of reports was $\qbase_1, \qbase_2, \ldots,
\qbase_M$. Thus, the worst-case loss of the SSM mechanism, over
all budgets, report sequences, and outcomes, is no more than
the worst-case loss of the corresponding MSR mechanism.

(\ref{prop:sybilproof}):
 This property also follows from equation~\ref{eq:concave}. Suppose the
initial market prediction is $\qbase_{i-1}$. Consider a set of
consecutive reported distributions $\qvec_{i1}, \qvec_{i2},
\ldots, \qvec_{ij}$, with corresponding budgets $b_{i1},
b_{i2}, \ldots, b_{ij}$. With this sequence, the final
reference value under SSM would be some distribution
$\qbase_{ij}$. By equation~\ref{eq:concave}, the total payoff
for these $j$ reports would be less than the MSR payoff of a
single move to $\qbase_{ij}$.

We will argue that an agent $i$ with the combined budget of all
$j$ agents could make at least as high a profit with a single
report $\qvec_i$. Let $b = \sum_j b_{ij}$. If $b \geq B$, then
agent $i$ could always move to $\qbase_{ij}$, thereby earning
(under SSM) an amount equal to the MSR payoff of moving to
$\qbase_{ij}$.

If $b < B$, then consider an agent who believes a distribution
$\pvec_i$.  It suffices to consider the case $j=2$; higher
values of $j$ will follow by induction, because we can first
replace the last two trades with a single trade without hurting
profits, and then repeat this procedure. Let $u_0, u_1, u_1$
denote the expected scores of $\qbase_{i-1},\qvec_{i1},
\qvec_{i2}$ under this belief, and let $v_1$ denote the
expected score of $\qbase_{i1}$ under this belief.
 For conciseness, let us
define $\lambda_{t} \defeq \frac{b_{it}}{B}$ and $\lambar_{t}
\defeq 1 -\lambda_{t}$. By equation~\ref{eq:concave}, we see
that $v_1 \geq \lambar_1 u_0 + \lambda_1 u_1$. The total payoff
of the $2$ trades is seen to be:
\begin{eqnarray*}
\lambda_1 [u_1 - u_0] + \lambda_2 [u_2 - v_1] \\
&\leq& \lambda_1 [u_1 - u_0] + \lambda_2 [u_2 - \lambda_1 u_1 - \lambar_1 u_0] \\
&=& \lambda_1 [u_1 - u_0] + \lambda_2 [u_2 - u_0] - \lambda_2\lambda_1 [u_1-u_0] \\
&=& \lambda_1\lambar_2 [u_1-u_0] + \lambda_2 [u_2 - u_0] \\
&\leq& (\lambda_1 + \lambda_2) \max\{u_1-u_0, u_2 - u_0\}
\end{eqnarray*}
The right hand side of the last equation is the payoff of a
single trade that reported the better of $\qvec_{i1},
\qvec_{i2}$ (under the believed distribution) with budget $b =
b_{i1} + b_{i2}$. Thus, the single trade never has worse payoff
than the two consecutive trades. Inductively replacing two
consecutive trades with a single trade, we can show that this
result holds for $j>2$ as well.

\end{proof}

We note that both the strategyproofness and resistance to sybil
attacks is of a myopic nature, in that it may not hold if
agents can make trades before and after honest traders' trades.
However, this restriction seems unavoidable, as even in the
setting without budget constraints, such non-myopic attacks can
be profitable under some models of trader
beliefs~\cite{Chen-algorithmica}.

When we receive a forecast of $\qvec_i$ from agent $i$, and know that it is rational for agent $i$ to report this honestly, it may be tempting to
update the market forecast to $\qvec_i$ instead of
$\qbase_i$. However, if we used $\qvec_i$ as the reference forecast for the
next trade, the market would not satisfy the myopic sybilproofness property described above: A trader may profit from splitting her trade into two successive trades.

Note that, if we had updated the market forecast and reference probability

For each agent $i$, if the budget $b_i >0$ is known, $i$'s true
belief can be determined from the updated market forecast
$\qbase$. (Depending on the form of $i$'s signal and the
initial $\qbase_{i-1}$, this may even be possible if the
budgets are not known). Thus, in a model where the budgets are
common knowledge, each trader can condition on all past
traders' beliefs, as in the budget-unconstrained market. In
other words, the SSM market generically enables perfect
information aggregation.

The SSM mechanism gets around the impossibility result of
section~\ref{sec:3outcome} because it enforces a stronger
restriction on how an agent may use her budget. In particular,
it enforces a form of the {\em Kelly gambling}~\cite{Kelly}
strategy, in which an agent invests a fraction of her budget
proportional to the magnitude of her unconstrained move. The
Kelly gambling strategy is asymptotically optimal for long-term
budget growth, but it is suboptimal for expected short-term
budget growth. In our model, this manifests itself in the
following way: An agent is never able to stake her entire
budget in her trade, and consequently, the updated prediction
$\qbase_1$ is less informative than the updated prediction of
the MSR with the same budgets. However, this may be more than
offset by the improved ability for future traders to aggregate
information, and because the enforced proportional-betting
policy leads to better long-term budget growth where agents may
be acting myopically.

\section{Discussion}\label{sec:discussion}
\paragraph{Extension to Continuous Outcomes}
For geographical events, such as forecasting the location of a tornado strike, it may be natural to model the outcome space as continuous, and set the goal of forecasting the expected value. Lambert \etal~\cite{Lambert-elicitation} have shown that (without budget constraints) the mean can be elicited by a proper scoring rule. However, Theorem~\ref{thm:2dimpossibility} extends {\em a fortiori} to a model in which we want to forecast the mean of a 2-dimensional
outcome, with independence between the dimensions. A special
case of the continuous model is the case in which all beliefs
are over the outcomes $(0,0), (0,1), (1,0), (1,1)$;
theorem~\ref{thm:2dimpossibility} shows that no smooth scoring
rule can be budget-constrained truthful in this case, and hence
in general.

\paragraph{Separable scoring rules}
 For forecasting the mean of a multi-dimensional outcome, one natural approach is to elicit separate mean forecasts along each dimension, and to pay off each dimension's forecast through a separate scoring rule. In other words, we might have a mean latitude market and a mean longitude market. With such a setup, it does not matter whether an agent believes the two dimensions are independent or dependent; her expected payoffs would depend only on her marginal beliefs over the two dimensions. Theorem~\ref{thm:2dimpossibility} extends {em a fortiori} to this case as well: For a separable scoring rule, an agent's expected payoff is the same as another agent with the same marginal beliefs who believes the two dimensions are independent, and hence, her behavior will be the same.

\paragraph{Transformed independence}
Because we make no additional assumptions about the structure
of the scoring rule in section~\ref{sec:2dim}, the results
extend to many families of restricted distributions. In
particular, any family of restricted beliefs that can be
smoothly reparameterized to be independent along two
parameters will yield the same result.

\paragraph{Mechanisms for risk-averse agents}
We have modeled traders who are risk-neutral up to a hard budget constraint; this can be viewed as a specific family of risk-averse utility functions.
Dimitrov et al.~\cite{DSE09} showed that there are no attractive
sequential market mechanisms to aggregate information for {\em all}
unknown risk types: any such mechanism must exhibit exponentially reducing payoffs in order to guarantee myopic sybilproofness with a bounded subsidy. The scaled scoring mechanism does not have this negative property. Thus, we have
shows one natural class of restricted risk
preferences for which it is possible to sequentially aggregate information.
We note, however, that this relies on the assumption that you cannot exaggerate your budget; hence, we have access to a mechanism that reveals some information about a trader's risk type.

\paragraph{Limitations and directions for future work}
 Our impossibility results required some technical assumptions on the smoothness of the scoring rules. Intuitively, there does not seem to be any essential advantage to using non-smooth scoring rules, and one direction for future work is to relax these assumptions. It would also be helpful to extend theorem~\ref{thm:insensitive} to higher-dimensional neighborhoods.
 
 The Scaled Scoring Mechanism relies on the fact that a trader cannot deposit an amount greater than her true budget. However, depending on her reported belief, there may be no outcome in which she loses her entire deposit. Thus, if there is an external credit market from which she can borrow with her trade as collateral, this assumption may be violated. One important question for future work is to find alternative market-like mechanisms for sequential information aggregation in the presence of budget limits.

\section*{Acknowledgments}

This work was supported by the National Science Foundation under research grant IIS-0812042 and CCF-0829754.
The authors would like to thank Yiling Chen, Nicolas Lambert, John Langford, David Pennock and Daniel Reeves for helpful discussions. Thanks especially to David Pennock and his research group at Yahoo!\ Research in New York that provided the environment that fostered the research in this paper.

\bibliographystyle{alpha}
\bibliography{scoring-properties}

\newcommand{\etalchar}[1]{$^{#1}$}
\begin{thebibliography}{CDS{\etalchar{+}}10}

\bibitem[All87]{Allen-scoring}
Franklin Allen.
\newblock Discovering personal probabilities when utility functions are
  unknown.
\newblock {\em Management Science}, 33(4):542--544, 1987.

\bibitem[CDS{\etalchar{+}}10]{Chen-algorithmica}
Yiling Chen, Stanko Dimitrov, Rahul Sami, Daniel Reeves, David Pennock, Robin
  Hanson, Lance Fortnow, and Rica Gonen.
\newblock Gaming prediction markets: Equilibrium strategies with a market
  maker.
\newblock {\em Algorithmica}, 58:930--969, 2010.

\bibitem[DSE09]{DSE09}
Stanko Dimitrov, Rahul Sami, and Marina Epelman.
\newblock Subsidized prediction markets for risk averse traders.
\newblock In {\em Internet and Network Economics}, volume 5929 of {\em Lecture
  Notes in Computer Science}, pages 491--497. Springer, Berlin / Heidelberg,
  2009.

\bibitem[Fri83]{Friedman83}
D.~Friedman.
\newblock Effective scoring rules for probabilistic forecasts.
\newblock {\em Management Science}, 29(4):447--454, 1983.

\bibitem[Goo52]{Good52}
I.~J. Good.
\newblock Rational decisions.
\newblock {\em Journal of the Royal Statistical Society. Series B
  (Methodological)}, 14(1):107--114, 1952.

\bibitem[Han03]{Hanson-MSR}
Robin Hanson.
\newblock Combinatorial information market design.
\newblock {\em Information Systems Frontiers}, 5(1):107--119, 2003.

\bibitem[Kel56]{Kelly}
J.~L. Kelly.
\newblock A new interpretation of information rate.
\newblock {\em Bell System Technical Journal}, 35:917--926, 1956.

\bibitem[LPS08]{Lambert-elicitation}
Nicolas~S. Lambert, David~M. Pennock, and Yoav Shoham.
\newblock Eliciting properties of probability distributions.
\newblock In {\em Proceedings of the 9th ACM conference on Electronic
  commerce}, EC '08, pages 129--138, New York, NY, USA, 2008. ACM.

\bibitem[Man04]{Manski}
Charles Manski.
\newblock Interpreting the prices of prediction markets.
\newblock Working Paper 10359, National Bureau of Economic Research, 2004.

\bibitem[RS07]{RS07}
Paul Resnick and Rahul Sami.
\newblock The influence limiter: {Provably} manipulation-resistant recommender
  systems.
\newblock In {\em Proceedings of the 2007 ACM conference on Recommender
  systems}, RecSys '07, pages 25--32, New York, NY, USA, 2007. ACM.

\bibitem[WM68]{WinklerMurphy68}
R.~L. Winkler and A.~H. Murphy.
\newblock Good probability assessors.
\newblock {\em Journal of Applied Meteorology}, 7:751--758, 1968.

\bibitem[WZ06]{WZ-budget}
Justin Wolfers and Eric Zitzewitz.
\newblock Interpreting prediction market prices as probabilities.
\newblock Working Paper 12200, National Bureau of Economic Research, 2006.

\end{thebibliography}
\end{document}